
\documentclass[journal]{IEEEtran}
%

\usepackage{amsmath}
\usepackage{amsfonts}
\usepackage{float}
\usepackage{graphicx}
\usepackage{amsthm}
\usepackage{amssymb}
\newtheorem{assumption}{Assumption}
\newtheorem{problem}{Problem}
\newtheorem{theorem}{Theorem}
\newtheorem{algorithm}{Algorithm}
\newtheorem{lemma}{Lemma}
\newtheorem{remark}{Remark}
\usepackage{accents}
\usepackage{color}
\usepackage{algorithm}
\usepackage[noend]{algpseudocode}
\def\BState{\State\hskip-\ALG@thistlm}
\usepackage{supertabular,booktabs}
\usepackage{cite}
\usepackage{multicol,lipsum}
\usepackage{stfloats}
\usepackage{arydshln}
\usepackage{mathtools}
\usepackage{graphicx}
\usepackage{subcaption}

\newcommand{\norm}[1]{\left\lVert#1\right\rVert}
\newcommand{\RNum}[1]{\uppercase\expandafter{\romannumeral #1\relax}}
\setcounter{MaxMatrixCols}{20}
\graphicspath{ {images/} }

\usepackage{tikz}
\usetikzlibrary{shapes,arrows}
\usetikzlibrary{positioning}	
	
\newtheoremstyle{theoremdd}
{\topsep}
{\topsep}
{\itshape}
{0pt}
{\bfseries}
{:}
{ }
{\thmname{#1}\thmnumber{ #2} \thmnote{(#3)}}

\theoremstyle{theoremdd}
\newtheorem{definition}{Definition}

\usepackage{chngcntr}
\usepackage{apptools}
\AtAppendix{\counterwithin{assumption}{section}}
\AtAppendix{\counterwithin{definition}{section}}
\AtAppendix{\counterwithin{theorem}{section}}
%
\ifCLASSINFOpdf
\else
\fi
\hyphenation{op-tical net-works semi-conduc-tor}

\begin{document}
%
\title{Learning-Based Adaptive Optimal Control of Linear Time-Delay Systems: A Policy Iteration Approach}
%
%
%

\author{Leilei Cui, 
         Bo Pang %
         and Zhong-Ping Jiang,~\IEEEmembership{Fellow,~IEEE}
\thanks{*This work is supported partly by the National Science Foundation under Grant EPCN-1903781.}
\thanks{L. Cui, B. Pang, and Z.~P.~Jiang are with the Control and Networks Lab, Department of Electrical and Computer Engineering, Tandon School of Engineering, New York University, Brooklyn, NY 11201, USA (e-mail: l.cui@nyu.edu; bo.pang@nyu.edu; zjiang@nyu.edu).}
}

%
%

\markboth{Manuscript}%
{Shell \MakeLowercase{\textit{et al.}}: Bare Demo of IEEEtran.cls for IEEE Journals}
%



\maketitle

\begin{abstract}
This paper studies the adaptive optimal control problem for a class of linear time-delay systems described by delay differential equations (DDEs). A crucial strategy is to take advantage of recent developments in reinforcement learning and adaptive dynamic programming and develop novel methods to learn adaptive optimal controllers from finite samples of input and state data. In this paper, the data-driven policy iteration (PI) is proposed to solve the infinite-dimensional algebraic Riccati equation (ARE) iteratively in the absence of exact model knowledge.  Interestingly, the proposed recursive PI algorithm is new in the present context of continuous-time time-delay systems, even when the model knowledge is assumed known. The efficacy of the proposed learning-based control methods is validated by means of practical applications arising from metal cutting and autonomous driving.
\end{abstract}

\begin{IEEEkeywords}
Adaptive dynamic programming (ADP), optimal control, linear time-delay systems, policy iteration. 
\end{IEEEkeywords}

%
\IEEEpeerreviewmaketitle

\section{Introduction}
Time-delay systems have attracted considerable attention, in large part because they are ubiquitous in many branches of science and engineering; see the books \cite{Book_Kolmanovskii,Book_Hale, Book_Karafyllis} for many references and examples. Recently, many theoretical results are developed for time-delay systems, such as input-to-state stability of time-delay systems\cite{Pepe2006,Yeganefar2008}, robust $H_\infty$ control of distributed delay systems \cite{Xie2001}, stabilization of large-scale stochastic systems with time delays \cite{Xie2000}, and stability analysis of systems with time-varying delay \cite{He2007, Gao2007, Lam2007}. Examples of time-delay systems are in transportation \cite{Ge2017, Huang2021}, biological motor control \cite{Gabor2009}, network-based control \cite{GAO2008}, communication networks \cite{Hollot2002}, multi-agent systems \cite{Liu2011,Tang2015}, and heating systems \cite{Vyhlidal2003}. It is thus not surprising that the optimal control problem of time-delay systems has been a fundamentally important, yet challenging, research topic in control theory for several decades. For instance, for a linear time-varying system with time-delay,  Eller \textit{et al.} \cite{Eller1969} proposed a first solution to the finite-horizon linear quadratic (LQ) optimal control problem. Ross \textit{et al.} \cite{Ross1969optimal, Ross1971TAC} presented a linear control law to solve the infinite horizon LQ optimal control problem for a class of linear time-delay systems. In these papers, in order to calculate the desired optimal control law, the certain infinite-dimensional Riccati equations have to be solved. For this problem, many algorithms have been developed \cite{Gibson1983, Banks1984, Burns2008}. However, an accurate model of the time-delay system is required for these algorithms, and in reality, it is difficult to derive an exact model due to the complexity of the system and the inevitable system uncertainties. Therefore, developing a model-free optimal control approach for time-delay systems is a timely research topic of both theoretical importance and practical relevance that requires further investigation. Recent progresses and successes in reinforcement learning (RL) provide an opportunity to advance the state of the art in the area of adaptive optimal control of time-delay systems.

RL is an important branch of machine learning and is aimed at 
maximizing (or minimizing) the cumulative reward (or cost) through continuous agent-environment interactions and exploration of policies. Under the RL framework, an agent is capable of learning an optimal policy (controller) when the environment is even unknown. A notable example of recent success is the development of advanced deep RL algorithms in the game of GO and many video games \cite{Silver2013,Silver2016}, that have achieved human-level intelligence leading to widespread attention from academia and industry.
Despite these successes, traditional RL has some fundamental limitations. For example, it often assumes that the environment is depicted by Markov decision processes and discrete-time
systems with finite or countable state-action space. Often, the stability aspect of the learned controller by RL is not guaranteed. For many systems described by differential equations, such as autonomous vehicles and quadrupedal robots, the state and action spaces are infinite and the stability of the controller generated by an RL algorithm is innegligible. Therefore, for these safety-critical engineering systems, conventional RL is not directly applicable to learning stable optimal controllers from data, which has motivated the development of adaptive dynamic programming \cite{Book_Jiang,Book_Lewis}. In contrast with conventional RL, the purpose of continuous-time ADP is aimed at addressing decision making problems for dynamical systems described by differential equations, of which both the state and action spaces are continuous, and it is theoretically shown that at each iteration of ADP, a stable sub-optimal controller with improved performance can be obtained. Besides, the sequence of these sub-optimal controllers converges to the optimal one \cite{Book_Jiang}. Therefore, with these advantages over conventional RL, ADP has been applied in various fields, including autonomous driving\cite{Cui2021ACC}, transportation\cite{Huang2021}, and robotics \cite{Cui2021IROS}, where the safety of these systems plays a pivotal role.

Recently, based on the ADP technique, both the policy iteration (PI) and value iteration (VI) approaches are developed for various important classes of linear/nonlinear/periodic dynamical systems and for optimal stabilization, tracking and output regulation problems \cite{Jiang_tutorial,Jiang2012, Bian2016,Bian2021,Gao2016,Pang2021}. However, a systematic ADP approach to adaptive optimal control of continuous-time time-delay systems is lacking, due to the infinite-dimensional nature of these systems. In \cite{Huang_bookchapter}, although the model-free data-driven control for continuous time-delay systems is studied, discretization and/or linearization techniques are applied to transfer the infinite-dimensional system to a finite-dimensional delay-free system with augmented states, which leads to an approximate model. In \cite{Lin2019, Wei2010, Liu2016, Zhang2014, Zhang2011, Wang2014, Huang2021,Escobedo2021}, ADP for discrete-time systems with time delays is studied. Due to the finite dimensionality of discrete-time systems with time delays, these proposed ADP methods are not applicable to continuous time-delay systems. In \cite{Moghadam2021_bookchapter, Moghadam2021}, ADP technique is applied for both linear and nonlinear systems with time delay, but as mentioned in \cite[Remark 9.1]{Moghadam2021_bookchapter}, the integral term is not included in the design of ADP controller to avoid solving the ARE in infinite-dimensional space. As a consequence, the resulting controller can only stabilize the system without achieving optimality. Technically, there are several obstacles in the generalization of ADP to time-delay systems. Firstly, for an infinite-dimensional system, optimality properties are hard to analyze, because the corresponding ARE is complex partial differential equations (PDEs). Secondly, stability analysis and controller design for a time-delay system are much more challenging than the corresponding ones in the finite-dimensional system setting. Therefore, the model-free optimal control for a continuous time-delay system remains an open problem. 

In this paper, in the absence of the precise knowledge of system dynamics, a novel data-driven PI approach for continuous linear time-delay systems are proposed based on ADP. The contributions of this paper are as follows. Firstly, inspired by Kleinman's model-based PI algorithm for delay-free linear systems \cite{Kleinman1968}, a new model-based PI algorithm is proposed for a class of linear time-delay systems. Given an admissible initial controller, both the stability of the updated sub-optimal controller at each iteration and the convergence of the sequence of learned controllers to the (unknown) optimal controller are proved theoretically. It is worth pointing out that different from delay-free systems, due to the infinite dimensionality, both the value function and the control law for the linear time-delay systems are functional of the system's state, which in consequence increases the difficulty to design the PI algorithm. Secondly, based on the aforementioned model-based PI, this paper contributes a data-driven PI approach to adaptive optimal controller design using only the data measured along the trajectories of the system. 

The rest of this paper is organized as follows. Section \RNum{2} introduces the class of linear time-delay systems and formulates the adaptive optimal control problem to be addressed in the paper. Section \RNum{3} proposes a model-based PI approach to iteratively solve the LQ optimal control problem for linear time-delay systems. In Section \RNum{4}, based on the theoretical result of the previous section, a data-driven PI approach is proposed, and the convergence property of the algorithm is analyzed. Section \RNum{5} illustrates the proposed data-driven PI approach by means of two practical examples. Finally, some concluding remarks are drawn in Section \RNum{6}.

\textit{Notations:} In this paper, $\mathbb{R}$ denotes the set of real numbers, $\mathbb{R}_+$ denotes the set of nonnegative real numbers, and $\mathbb{N}_+$ denotes the set of positive integers. $|\cdot|$ denotes the Euclidean norm of a vector or Frobenius norm of a matrix. $\norm{\cdot}_\infty$ denotes the supremum norm of a function. $\mathcal{C}^0\left(X, Y \right)$ denotes the class of continuous functions from the linear space $X$ to the linear space $Y$, respectively. $\mathcal{AC}\left([-\tau,0], \mathbb{R}^n \right)$ denotes the class of absolutely continuous functions. $\frac{\mathrm{d}{f}}{\mathrm{d}{\theta}}(\cdot)$ denotes the function which is the derivative of the function $f$. $\oplus$ denotes the direct sum. $L_i([-\tau, 0], \mathbb{R}^n)$ denotes the space of measurable functions for which the $i$th power of the Euclidean norm is Lebesgue integrable, $\mathcal{M}_2 = \mathbb{R}^n \oplus {L}_2([-\tau, 0], \mathbb{R}^n)$, and $\mathcal{D} = \left\{ \begin{bmatrix} r \\ f(\cdot) \end{bmatrix} \in \mathcal{M}_2: f \in \mathcal{AC}, \frac{\mathrm{d}f}{\mathrm{d}\theta}(\cdot) \in L_2\text{, and } f(0) = r\right\}$. $\langle \cdot, \cdot \rangle$ denotes the inner product in $\mathcal{M}_2$, i.e. $\langle z_1, z_2 \rangle = r_1^\top r_2+ \int_{-\tau}^{0} f_1^\top(\theta)f_2(\theta)\mathrm{d}\theta$, where $z_i = [r_i,f_i(\cdot)]^\top$ for $i=1,2$. $\mathcal{L}(X)$ and $\mathcal{L}(X,Y)$ denote the class of continuous bounded linear operators from $X$ to $X$ and from $X$ to $Y$ respectively. $\otimes$ denotes the Kronecker product. $\text{vec}(A) = \left[a_1^\top,a_2^\top,...,a_n^\top \right]^\top$, where $A \in \mathbb{R}^{n \times n}$ and $a_i$ is the $i$th column of $A$. For a symmetric matrix $P \in \mathbb{R}^{n \times n}$, $\text{vecs}(P) = [p_{11},2p_{12},...,2p_{1n},p_{22},2p_{23},...,2p_{(n-1)n},p_{nn}]^\top$, $\text{vecu}(P) = [2p_{12},...,2p_{1n},2p_{23},...,2p_{(n-1)n}]^\top$, and $\text{diag}(P) = [p_{11},p_{22},...,p_{nn}]^\top$. For two arbitrary vectors $\nu, \mu \in \mathbb{R}^{n}$, $\text{vecd}(\nu,\mu) = [\nu_1\mu_1,\cdots, \nu_{n} \mu_n]^\top$, $\text{vecv}(\nu) = [\nu_1^2,\nu_1 \nu_2,...,\nu_1 \nu_n,\nu_2^2,...,\nu_{n-1} \nu_n,\nu_{n}^2]^\top$, $\text{vecp}(\nu,\mu) = [\nu_1 \mu_2,...,\nu_1 \mu_n, \nu_2 \mu_3,...,\nu_{n-1} \mu_n]^\top$.  $[a]_{i \sim j}$ denotes the sub-vector of the vector $a$ comprised of the entries between the $i$th and $j$th entries. $A^\dagger$ denotes the Moore-Penrose inverse of matrix $A$.


\section{Problem Formulation and Preliminaries}
\subsection{Problem Formulation}
Consider a linear time-delay system
\begin{align}
    \Dot{x}(t) = Ax(t) + A_dx(t-\tau) + Bu(t),
\label{eq: Time delay system}
\end{align}
where $\tau \in \mathbb{R}_+$ denotes the delay of the system and is assumed to be constant and known, $x(t) \in \mathbb{R}^n$, $u(t) \in \mathbb{R}^{m}$. $A$, $A_d \in \mathbb{R}^{n\times n}$ and $B \in \mathbb{R}^{n \times m}$ are unknown constant matrices. Let $x_t (\theta)=x(t+\theta), \; \forall \theta\in [-\tau, 0] $ denote a segment of the state trajectory in the interval $[t-\tau, t]$. Due to the infinite dimensionality of the system (\ref{eq: Time delay system}), the state of the system is $z(t) = [x^\top(t), x^\top_t(\cdot)]^\top \in \mathcal{M}_2$. Define the linear operators $\mathbf{A}\in \mathcal{L}(\mathcal{M}_2), \mathbf{B} \in \mathcal{L}(\mathbb{R}^m, \mathcal{M}_2)$ as $\mathbf{A}z(t) = \begin{bmatrix} Ax(t) + A_dx_t(-\tau) \\ \frac{\mathrm{d}x_t}{\mathrm{d}\theta}(\cdot)  \end{bmatrix}$ and  $\mathbf{B}u(t) = \begin{bmatrix} Bu(t) \\ 0  \end{bmatrix}$, where $\frac{\mathrm{d}x_t}{\mathrm{d}\theta}(\cdot)$ denotes the derivative of $x_t(\theta)$ with respect to $\theta$. Then, according to \cite[Theorem 2.4.6]{Curtain1995}, (\ref{eq: Time delay system}) can be rewritten as
\begin{align}
    \dot{z}(t) = \mathbf{A}z(t) + \mathbf{B} u(t), 
    \label{eq: linear time delay op}
\end{align}
with the domain of $\mathbf{A}$ given by $\mathcal{D}$.
Let $z_0 = \begin{bmatrix} x(0) \\ x_0(\cdot) \end{bmatrix} \in \mathcal{D}$ denote the initial state of the system (\ref{eq: linear time delay op}). The performance index of (\ref{eq: Time delay system}) is 
\begin{align}
\begin{split}
    J(x_0,u) &= \int_{0}^{\infty} x(t)^\top Q x(t) + u(t)^\top R u(t) \mathrm{d}t \\
    &=\int_{0}^{\infty} \langle z(t), \mathbf{Q} z(t) \rangle + u(t)^\top R u(t) \mathrm{d}t,
\end{split}
\label{eq: cost function}
\end{align}
where $R^\top = R > 0$, $Q^\top = Q \geq 0$, and $\mathbf{Q} = \begin{bmatrix}Q & \\ & \mathbf{0} \end{bmatrix} \in \mathcal{L}(\mathcal{M}_2)$ is symmetric \cite[Chapter 6]{Eidelman_book} and non-negative\cite[Definition 6.3.1]{Eidelman_book}. 
\begin{definition}\label{def:admissible}
For the system (\ref{eq: Time delay system}), a control policy $u_c(x_t): \mathcal{D} \rightarrow \mathbb{R}^m$ is called admissible with respect to (\ref{eq: cost function}), if the linear time-delay system (\ref{eq: Time delay system}) with $u = u_c(x_t)$ is globally asymptotically stable (GAS) at the origin \cite[Definition 1.1]{Book_Keqin}, and the performance index (\ref{eq: cost function}) is finite for all $z_0 \in \mathcal{D}$.
\end{definition}
\begin{assumption} \label{ass: Sta and Dec}
The system (\ref{eq: Time delay system}) with the output $y(t) = {Q}^{\frac{1}{2}}x(t)$ is exponentially stabilizable and detectable, which are defined in \cite[Definition 5.2.1]{Curtain1995} and can be checked according to \cite[Theorem 5.2.12]{Curtain1995}.
\end{assumption} 
\begin{remark}
    Assumption \ref{ass: Sta and Dec} is a standard prerequisite for LQ optimal control of system \eqref{eq: Time delay system} to ensure the existence of a stabilizing solution \cite{Curtain1995,Fridman2014}.
\end{remark}
Given the aforementioned assumption, the problems to be studied in this paper can be formulated as follows.

\begin{problem}
    Given an initial admissible controller $u_1(x_t) = -K_{0,1}x(t) - \int_{-\tau}^{0}K_{1,1}(\theta)x_t(\theta)\mathrm{d}\theta$, and without knowing the dynamics of the system (\ref{eq: Time delay system}), design a PI-based ADP algorithm to find approximations of the optimal controller which can minimize  (\ref{eq: cost function}) using only the input-state data measured along the trajectories of the system.   
\label{pbm: PI}
\end{problem}

\subsection{Optimality and Stability}
For a linear system without time delay, i.e. $A_d=0$ in (\ref{eq: Time delay system}), one can calculate the optimal controller by solving the ARE as discovered by Kalman \cite{kalman1960}. Correspondingly, for the linear time-delay system \eqref{eq: Time delay system}, the sufficient condition for a model-based solution to the optimal control problem is stated as follows.
\begin{lemma}[\cite{Ross1969optimal,Uchida1986}] \label{lm: Sufficient Optimality}

For system (\ref{eq: Time delay system}) with Assumption \ref{ass: Sta and Dec},
\begin{align}
\begin{split}
    &u^*(x_t) = -K_0^*x(t) - \int_{-\tau}^{0}K_1^*(\theta)x_t(\theta)\mathrm{d}\theta\\
    &= -R^{-1}B^\top P_0^{*}x(t) - \int_{-\tau}^{0}R^{-1}B^\top P^*_1(\theta)x_t(\theta)\mathrm{d}\theta 
\end{split}
\label{eq: optimal control}
\end{align}
is the optimal controller minimizing the cost (\ref{eq: cost function}), and the corresponding minimal performance index is
\begin{align}
\begin{split}
    V^*(x_0) &= x^\top(0)P^*_0x(0) + 2x^\top(0) \int_{-\tau}^{0} P^*_1(\theta)x_0(\theta)\mathrm{d}\theta \\
    &+ \int_{-\tau}^{0}\int_{-\tau}^{0} x_0^\top(\xi)P^*_2(\xi,\theta)x_0(\theta)\mathrm{d}\xi \mathrm{d} \theta,
\end{split} 
\label{eq: optimal cost}
\end{align}
where $P^*_0 = P_0^{*\top }> 0$, $P^*_1(\theta)$, and $P_2^{*\top}(\theta,\xi) = P^*_2(\xi,\theta)$ for $\theta,\xi \in [-\tau, 0]$ are the unique solution to the following PDEs 
\begin{align} 
\begin{split}
    & A^\top P^*_0 + P^*_0 A - P^*_0 B R^{-1} B^\top P^*_0 \\
    & \qquad \qquad \qquad \qquad \quad + P^*_1(0) + P_1^{*\top}(0) + Q = 0 ,\\
    & \frac{\mathrm{d}P^*_1(\theta)}{\mathrm{d}\theta} = (A^\top - P^*_0BR^{-1}B^\top)P^*_1(\theta) + P^*_2(0,\theta) ,\\
    & \partial_\xi P^*_2(\xi,\theta) + \partial_\theta P^*_2(\xi,\theta) = -P_1^{*\top}(\xi)BR^{-1}B^\top P^*_1(\theta) ,\\
    & P^*_1(-\tau) = P^*_0A_d ,\\
    & P^*_2(-\tau,\theta) = A_d^\top P^*_1(\theta). 
\end{split}
\label{eq: Ricatti Equation}
\end{align}
\end{lemma}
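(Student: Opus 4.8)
The plan is to lift the delay system to the infinite-dimensional state space $\mathcal{M}_2$, invoke the abstract linear-quadratic (LQ) theory for $C_0$-semigroups to obtain an operator Riccati equation, and then translate that operator equation into the kernel PDEs \eqref{eq: Ricatti Equation} by an integration-by-parts argument. First I would recall from \cite{Curtain1995} that $\mathbf{A}$ generates a $C_0$-semigroup on $\mathcal{M}_2$ and that, under Assumption \ref{ass: Sta and Dec} (exponential stabilizability and detectability), the abstract problem of minimizing \eqref{eq: cost function} subject to \eqref{eq: linear time delay op} admits a unique nonnegative self-adjoint operator $\mathbf{P} \in \mathcal{L}(\mathcal{M}_2)$ solving
\begin{equation*}
\langle \mathbf{A}z_1, \mathbf{P}z_2\rangle + \langle \mathbf{P}z_1, \mathbf{A}z_2\rangle - \langle \mathbf{B}^*\mathbf{P}z_1, R^{-1}\mathbf{B}^*\mathbf{P}z_2\rangle + \langle z_1, \mathbf{Q}z_2\rangle = 0
\end{equation*}
for all $z_1, z_2 \in \mathcal{D}$, with optimal cost $V^*(x_0) = \langle z_0, \mathbf{P}z_0\rangle$ and optimal feedback $u^* = -R^{-1}\mathbf{B}^*\mathbf{P}z$. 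Detectability will be used to upgrade nonnegativity of the $\mathbb{R}^n$-block to $P_0^* > 0$.

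Next I would parametrize $\mathbf{P}$ through its kernels. Since $\mathcal{M}_2 = \mathbb{R}^n \oplus L_2([-\tau,0],\mathbb{R}^n)$, every bounded self-adjoint operator acts as
\begin{equation*}
\mathbf{P}\begin{bmatrix} r \\ f(\cdot) \end{bmatrix} = \begin{bmatrix} P_0 r + \int_{-\tau}^{0} P_1(\theta) f(\theta)\,\mathrm{d}\theta \\ P_1^\top(\cdot) r + \int_{-\tau}^{0} P_2(\cdot,\theta) f(\theta)\,\mathrm{d}\theta \end{bmatrix},
\end{equation*}
with $P_0 = P_0^\top$ and $P_2^\top(\theta,\xi) = P_2(\xi,\theta)$. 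Evaluating $\langle z_0, \mathbf{P}z_0\rangle$ with $z_0 = [x(0), x_0(\cdot)]^\top$ reproduces the quadratic functional \eqref{eq: optimal cost} verbatim. Computing the adjoint from $\langle \mathbf{B}u, z\rangle = u^\top B^\top r$ gives $\mathbf{B}^*[r,f(\cdot)]^\top = B^\top r$, so $-R^{-1}\mathbf{B}^*\mathbf{P}z$ yields \eqref{eq: optimal control} under the identifications $K_0^* = R^{-1}B^\top P_0^*$ and $K_1^*(\theta) = R^{-1}B^\top P_1^*(\theta)$.

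The crux is substituting this kernel form into the operator Riccati equation. The term $\langle \mathbf{P}z_1, \mathbf{A}z_2\rangle$ pairs the $L_2$-block $P_1^\top(\cdot)r_1 + \int P_2(\cdot,\theta)f_1(\theta)\,\mathrm{d}\theta$ against $f_2'(\cdot)$; integrating by parts in the kernel variable moves the derivative onto $P_1$ and $P_2$, producing interior terms $\tfrac{\mathrm{d}}{\mathrm{d}\theta}P_1(\theta)$ and $(\partial_\xi + \partial_\theta)P_2(\xi,\theta)$, together with boundary terms at $\theta = 0$ and $\theta = -\tau$. Testing against arbitrary $z_1, z_2 \in \mathcal{D}$ and matching the three independent tensor structures $r^\top(\cdot)r$, $r^\top(\cdot)f(\theta)$, and $f(\xi)^\top(\cdot)f(\theta)$ isolates the three PDEs in \eqref{eq: Ricatti Equation}; the delayed contribution $A_d f(-\tau)$ carried by $\mathbf{A}$, combined with the $\theta=-\tau$ boundary terms, yields the two boundary conditions $P_1^*(-\tau) = P_0^* A_d$ and $P_2^*(-\tau,\theta) = A_d^\top P_1^*(\theta)$, while the domain constraint $f(0) = r$ closes the boundary terms at $\theta = 0$ (these feed the $P_1^*(0)$ contribution into the first PDE).

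I expect the main obstacle to be the careful bookkeeping of the boundary terms generated by the unbounded differentiation part of $\mathbf{A}$: the integration by parts must cleanly separate the coupling induced by $f(0)=r$ (which pushes the $L_2$-block back into the $\mathbb{R}^n$-block) from the delayed coupling $A_d f(-\tau)$, and one must justify enough regularity of the kernels—$P_1^*$ absolutely continuous and $P_2^*$ possessing the transport structure along characteristics $\xi - \theta = \mathrm{const}$—for these manipulations to be valid. Establishing the equivalence between the kernel system \eqref{eq: Ricatti Equation} and the abstract Riccati equation, rather than attempting a direct variational derivation, is the cleanest route, since existence, uniqueness, and optimality are already furnished by the abstract theory of \cite{Curtain1995,Ross1969optimal,Uchida1986}.
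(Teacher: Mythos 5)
The paper does not actually prove Lemma~\ref{lm: Sufficient Optimality}: it is imported from \cite{Ross1969optimal,Uchida1986} with no in-paper argument, so there is no internal proof to compare against. Your plan --- lift the dynamics to $\mathcal{M}_2$, invoke the abstract LQ/operator-ARE theory of \cite{Curtain1995} under Assumption~\ref{ass: Sta and Dec}, and then translate the operator Riccati equation into the kernel PDEs by integration by parts and matching of the $r^\top(\cdot)r$, $r^\top(\cdot)f(\theta)$, and $f(\xi)^\top(\cdot)f(\theta)$ blocks --- is exactly the route those references take, and your identifications of $\mathbf{B}^*$, of the gains $K_0^*=R^{-1}B^\top P_0^*$, $K_1^*(\theta)=R^{-1}B^\top P_1^*(\theta)$, and of the boundary conditions at $\theta=-\tau$ and $\theta=0$ are all correct in outline.

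The one genuine gap is the assertion that \emph{every} bounded self-adjoint operator on $\mathcal{M}_2$ has the integral-kernel form you write down. That is false (the identity on $L_2([-\tau,0],\mathbb{R}^n)$ is not an integral operator), and it is precisely the step that lets you pass from the abstract solution $\mathbf{P}$ to the kernels $P_0^*,P_1^*,P_2^*$, hence to \eqref{eq: optimal cost}, \eqref{eq: optimal control}, and the PDE system. The kernel structure of the Riccati operator for retarded systems is a theorem, not a generality about bounded operators: it must be obtained either (a) from the structural theory of delay semigroups (the explicit form of $\mathbf{T}(t)$ inserted into the Riccati integral equation, as in Vinter--Kwong, Delfour, and the cited references), or (b) by running your argument in the opposite direction --- establish existence of a solution to \eqref{eq: Ricatti Equation}, define the candidate operator through those kernels, verify via your integration-by-parts bookkeeping that it is a nonnegative self-adjoint solution of the operator ARE, and then invoke uniqueness of the stabilizing solution from \cite{Curtain1995} to conclude it coincides with $\mathbf{P}$. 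Route (b) makes the claimed uniqueness of the PDE solution an immediate corollary, but leaves you owing an existence proof for the coupled PDEs; route (a) gets existence from the abstract side but requires the structural representation you currently assume. As written, the proposal takes the hardest structural fact for granted; a similar (lesser) gloss is the claim that detectability alone yields $P_0^*>0$ rather than $P_0^*\geq 0$, which also deserves an explicit argument given that only $Q\geq 0$ is assumed.
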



According to \cite[Theorem 6.2.7]{Curtain1995}, the time-delay system \eqref{eq: Time delay system} in closed-loop with $u^*$ is exponentially stable at the origin.


\section{Model-Based Policy Iteration}
According to Lemma \ref{lm: Sufficient Optimality}, if (\ref{eq: Ricatti Equation}) can be solved, the optimal controller is obtained. However, due to the non-linearity with respect to $P_0^*$, $P_1^*$ and $P_2^*$, it is difficult to solve (\ref{eq: Ricatti Equation}) directly. Therefore, the model-based PI algorithm is proposed to simplify the process of solving (\ref{eq: Ricatti Equation}). 

Given an admissible controller $u_1(x_t) = -K_{0,1}x(t) - \int_{-\tau}^{0} K_{1,1}(\theta) x_t(\theta)\mathrm{d}\theta$, the model-based PI algorithm for system (\ref{eq: Time delay system}) is proposed as follows.  
\begin{enumerate}
    \item \textit{Policy Evaluation}: For $i \in \mathbb{N}_+$, and $\xi,\theta \in [-\tau,0]$, calculate $P_{0,i} = P_{0,i}^{\top }> 0$, $P_{1,i}(\theta)$, and $P_{2,i}^{\top}(\theta,\xi) = P_{2,i}(\xi,\theta)$ by solving the following PDEs,
\begin{align}
\begin{split}
    & A_i^{\top} P_{0,i} + P_{0,i} A_i + Q_i + P_{1,i}(0) + P_{1,i}^\top(0)  = 0 ,\\
    & \frac{\mathrm{d}P_{1,i}(\theta)}{\mathrm{d}\theta} = A_i^{\top} P_{1,i}(\theta) - P_{0,i}B K_{1,i}(\theta) \\
    &\qquad \qquad \qquad \qquad  +  K^\top_{0,i}R K_{1,i}(\theta) + P_{2,i}(0,\theta) ,\\
    & \partial_\xi P_{2,i}(\xi,\theta) + \partial_\theta P_{2,i}(\xi,\theta) = K^\top_{1,i}(\xi)R K_{1,i}(\theta)\\
    &\qquad \qquad \qquad \qquad \qquad  \quad- 2K^\top_{1,i}(\xi)B^\top P_{1,i}(\theta) ,\\
    & P_{1,i}(-\tau) = P_{0,i}A_d ,\\
    & P_{2,i}(-\tau,\theta) = A_d^\top P_{1,i}(\theta) ,   
\label{eq: policy evaluation}
\end{split}    
\end{align}
where $A_i = (A - BK_{0,i})$ and $Q_i = Q + K^\top_{0,i} R K_{0,i}$. 
    \item \textit{Policy Improvement}: Update the policy $u_{i+1}$ by
    \begin{small}
    \begin{align}
    \begin{split}
        u_{i+1}(x_t) = -K_{0,i+1}x(t) - \int_{-\tau}^{0} K_{1,i+1}(\theta) x_t(\theta)\mathrm{d}\theta\\
        = -R^{-1}B^\top P_{0,i}x(t) - \int_{-\tau}^{0}R^{-1}B^\top P_{1,i}(\theta)x_t(\theta)\mathrm{d}\theta.
    \end{split}
    \label{eq: policy improvement}
    \end{align}
    \end{small}
\end{enumerate}

The policy evaluation calculates the value functional $V_i(x_0) = J(x_0, u_i)$, which is expressed as 
\begin{align}
\begin{split}
    V_i(x_0) &= x^\top(0)P_{0,i}x(0) + 2x^\top(0) \int_{-\tau}^{0} P_{1,i}(\theta)x_0(\theta)\mathrm{d}\theta \\
    &+ \int_{-\tau}^{0}\int_{-\tau}^{0} x_0^\top(\xi)P_{2,i}(\xi,\theta)x_0(\theta)\mathrm{d}\xi d \theta.
\end{split} 
\label{eq: Vi}
\end{align}
By policy improvement, the value functional is monotonically decreasing ($V_{i+1}(x_0) \leq V_{i}(x_0)$), and converges to the optimal value functional $V^*(x_0)$. Correspondingly,  $P_{0,i}$, $P_{1,i}(\theta)$ and $P_{2,i}(\xi,\theta)$ converge to the optimal solutions $P_{0}^*$, $P_{1}^*(\theta)$ and $P_{2}^*(\xi,\theta)$, respectively. The convergence of the model-based PI algorithm is rigorously demonstrated in Theorem \ref{lm: model-based PI}. Before stating Theorem \ref{lm: model-based PI}, we first introduce Lemma \ref{lm: stability} which is instrumental for the proof of Theorem \ref{lm: model-based PI}. With the help of Lemma \ref{lm: stability}, if the cost $J(x_0,u_L)$ for a linear controller $u_L$ is finite, the closed-loop system with $u_L$ is globally exponentially stable. Consequently, $u_L$ is admissible.

\begin{figure*}[b]
\noindent\makebox[\linewidth]{\rule{\textwidth}{0.4pt}}
\vspace{0pt}
\begin{small}
\begin{subequations}\label{eq: Vdot}
\begin{align}
    &\dot{V}_i(x_t) = (Ax(t) + A_d x(t-\tau) + Bu(t))^\top P_{0,i}x(t) + x^\top(t)P_{0,i}(Ax(t) + A_d x(t-\tau) + Bu(t)) \nonumber \\
    &\quad + 2(Ax(t) + A_d x(t-\tau) + Bu(t))^\top \int_{-\tau}^{0} P_{1,i}(\theta)x_t(\theta)\mathrm{d}\theta  + 2x^\top(t)\int_{-\tau}^{0}P_{1,i}(\theta)\frac{\mathrm{d}}{\mathrm{d}t}x(t+\theta)\mathrm{d}\theta  \nonumber\\
    &\quad+ \int_{-\tau}^{0}\int_{-\tau}^{0} \frac{\mathrm{d}}{\mathrm{d}t}x^\top(t+\xi)P_{2,i}(\xi,\theta)x(\theta)\mathrm{d}\xi \mathrm{d}\theta + \int_{-\tau}^{0}\int_{-\tau}^{0} x_t^\top(\xi)P_{2,i}(\xi,\theta)\frac{\mathrm{d}}{\mathrm{d}t}x(t+ \theta)\mathrm{d}\xi \mathrm{d}\theta, \\
    &= x^\top(t)(A^\top P_{0,i} + P_{0,i}A)x(t) + 2x^\top(t)P_{0,i}A_dx(t - \tau) + 2x^\top(t)P_{0,i}Bu(t) \nonumber\\
    &\quad+ 2(Ax(t) + A_d x(t-\tau) + Bu(t))^\top \int_{-\tau}^{0} P_{1,i}(\theta)x_t(\theta)\mathrm{d}\theta  + 2x^\top(t)P_{1,i}(\theta)x_t(\theta)|_{\theta=-\tau}^{0} - 2x^\top(t) \int_{-\tau}^{0}\frac{\mathrm{d}}{\mathrm{d}\theta}(P_{1,i}(\theta))x_t(\theta)\mathrm{d}\theta  \nonumber\\
    &\quad+ \int_{-\tau}^{0}x_t^\top(\xi)P_{2,i}(\xi,\theta)x_t(\theta) \mathrm{d}\theta |_{\xi=-\tau}^{0} - \int_{-\tau}^{0}\int_{-\tau}^{0}x_t^\top(\xi) \partial_\xi P_{2,i}(\xi,\theta)x(\theta)\mathrm{d}\xi \mathrm{d}\theta \nonumber\\
    &\quad+ \int_{-\tau}^{0} x_t^\top(\xi)P_{2,i}(\xi,\theta)x_t( \theta) \mathrm{d}\xi |_{\theta=-\tau}^{0} - \int_{-\tau}^{0}\int_{-\tau}^{0} x_t^\top(\xi) \partial_\theta P_{2,i}(\xi,\theta)x_t(\theta) \mathrm{d}\theta \mathrm{d}\xi \label{eq:partialInt}\\
    &= x^\top(t)(A^\top P_{0,i} + P_{0,i}A + P_{1,i}(0) + P_{1,i}^\top(0))x(t) + 2x^\top(t)(P_{0,i}A_d - P_{1,i}(-\tau))x(t - \tau) \nonumber\\
    &\quad+ 2x^\top(t)\int_{-\tau}^{0} \left(A^TP_{1,i}(\theta) - \frac{\mathrm{d}}{\mathrm{d}\theta}P_{1,i}(\theta) + P_{2,i}(0,\theta) \right)x_t(\theta)\mathrm{d}\theta + 2x^\top(t-\tau)\int_{-\tau}^{0} \left(A_d^\top P_{1,i}(\theta) - P_{2,i}(-\tau,\theta) \right)x_t(\theta)\mathrm{d}\theta \nonumber\\ 
    &\quad-\int_{-\tau}^{0}\int_{-\tau}^{0} x_t^\top(\xi)\left(\partial_\theta P_{2,i}(\xi,\theta) + \partial_\xi P_{2,i}(\xi,\theta) \right)x_t(\theta) \mathrm{d}\xi \mathrm{d}\theta + 2 u^\top(t) B^\top P_{0,i}x(t) + 2 u^\top(t)\int_{-\tau}^{0} B^\top P_{1,i}(\theta)x_t(\theta)\mathrm{d}\theta , \label{eq:CombSimilarTerms} \\
    &=  x^\top(t)(-Q_i + K_{0,i}^\top B^\top P_{0,i} + P_{0,i}B K_{0,i})x(t) + 2x^\top(t)\int_{-\tau}^{0} \left( K_{0,i}^\top B^\top P_{1,i}(\theta) + P_{0,i}BK_{1,i}(\theta) - K_{0,i}^\top R K_{1,i}(\theta) \right)x_t(\theta)\mathrm{d}\theta  \nonumber\\ 
    &\quad -\int_{-\tau}^{0}\int_{-\tau}^{0} x_t^\top(\xi)\left(K^\top_{1,i}(\xi)R K_{1,i}(\theta)
    - 2K^\top_{1,i}(\xi)B^\top P_{1,i}(\theta) \right)x_t(\theta) \mathrm{d}\xi \mathrm{d}\theta + 2 u^\top(t) \left(B^\top P_{0,i}x(t) + \int_{-\tau}^{0} B^\top P_{1,i}(\theta)x_t(\theta)\mathrm{d}\theta \right) \label{eq:simplyPI}\\
    & = -x^\top Q x - u_i^\top Ru_i + 2u_{i+1}^\top R u_{i} - 2u^\top R u_{i+1} . \label{eq:simplyuExp} 
\end{align}
\end{subequations}
\end{small}%
\end{figure*}

\begin{lemma} \label{lm: stability}
Consider the linear time-delay system (\ref{eq: Time delay system}) under Assumption \ref{ass: Sta and Dec}. If the linear controller $u_L(x_t) = -\mathbf{K}z(t)$ satisfies $J(x_0,u_L)<\infty$ for any $z_0 \in \mathcal{D}$, where $\mathbf{K} \in \mathcal{L}(\mathcal{M}_2, \mathbb{R}^m)$, then the closed-loop system with $u_L$ is globally exponentially stable at the origin.
\end{lemma}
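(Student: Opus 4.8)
The plan is to lift the closed loop to the abstract evolution equation on $\mathcal{M}_2$ and to prove that its semigroup is exponentially stable by means of Datko's theorem, using the detectability in Assumption \ref{ass: Sta and Dec} to upgrade the finite-cost hypothesis into an $L_2$ estimate for the \emph{whole} state $z(\cdot)$, not just for the weighted output. First I would observe that, since $\mathbf{K}\in\mathcal{L}(\mathcal{M}_2,\mathbb{R}^m)$, the operator $\mathbf{B}\mathbf{K}$ is bounded, so $\mathbf{A}_K:=\mathbf{A}-\mathbf{B}\mathbf{K}$ generates a $C_0$-semigroup $T_K(t)$ on $\mathcal{M}_2$ (a bounded perturbation of the generator $\mathbf{A}$ from \eqref{eq: linear time delay op}), and the closed-loop trajectory is the orbit $z(t)=T_K(t)z_0$. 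Writing $\mathbf{C}z:=Q^{\frac12}x$ for the output operator, so that $\langle z,\mathbf{Q}z\rangle=|\mathbf{C}z|^2$, the finiteness of $J(x_0,u_L)$ together with $R>0$ and $Q\ge 0$ gives $\int_0^\infty|\mathbf{C}z(t)|^2\,dt\le J(x_0,u_L)<\infty$ and $\int_0^\infty|u_L(t)|^2\,dt\le\lambda_{\min}(R)^{-1}J(x_0,u_L)<\infty$; that is, both the output $y=\mathbf{C}z$ and the control $u_L$ lie in $L_2([0,\infty),\cdot)$, at least for $z_0\in\mathcal{D}$.

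Next, Assumption \ref{ass: Sta and Dec} (detectability of $(\mathbf{C},\mathbf{A})$, in the sense of \cite[Def.\ 5.2.1]{Curtain1995}) furnishes a bounded output-injection operator $\mathbf{L}$ such that $\mathbf{A}_L:=\mathbf{A}-\mathbf{L}\mathbf{C}$ generates an exponentially stable semigroup $T_L(t)$ with $\|T_L(t)\|\le Me^{-\omega t}$. The key algebraic step is to rewrite the closed loop as $\dot z=\mathbf{A}_L z+\mathbf{L}\mathbf{C}z-\mathbf{B}\mathbf{K}z=\mathbf{A}_L z+\mathbf{L}y+\mathbf{B}u_L$ and to treat $\mathbf{L}y+\mathbf{B}u_L$ as a known $L_2$ forcing term. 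The variation-of-constants formula $z(t)=T_L(t)z_0+\int_0^t T_L(t-s)\bigl(\mathbf{L}y(s)+\mathbf{B}u_L(s)\bigr)\,ds$, combined with Young's inequality for the exponentially decaying kernel $Me^{-\omega t}$, then yields $z(\cdot)\in L_2([0,\infty),\mathcal{M}_2)$. Once $\int_0^\infty\|T_K(t)z_0\|^2\,dt<\infty$ is known for \emph{every} $z_0\in\mathcal{M}_2$, Datko's theorem delivers $\|T_K(t)\|\le\tilde Me^{-\tilde\omega t}$, which is exactly global exponential stability of the closed loop (the $\mathcal{M}_2$-norm controlling $|x(t)|^2+\int_{t-\tau}^{t}|x(s)|^2\,ds$).

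The hard part is the quantifier: Datko's theorem requires square-integrability of $t\mapsto T_K(t)z_0$ for \emph{all} $z_0\in\mathcal{M}_2$, whereas the hypothesis only supplies finite cost, hence $y,u_L\in L_2$, for $z_0\in\mathcal{D}$, which is dense but proper. I would bridge this gap with the smoothing property of the retarded semigroup. For an arbitrary $z_0=(r,x_0)\in\mathcal{M}_2$ the first component $x(\cdot)$ solves $\dot x=Ax+A_d\,x_0(\cdot-\tau)+Bu_L$ on $[0,\tau]$, where $x_0(\cdot-\tau)\in L_2$ and $u_L(\cdot)=-\mathbf{K}z(\cdot)$ is continuous (being the image under a bounded operator of the continuous orbit $z(\cdot)$); hence $x$ is absolutely continuous with $\dot x\in L_2$ on $[0,\tau]$, and therefore $z(\tau)=T_K(\tau)z_0=(x(\tau),x_\tau)\in\mathcal{D}$. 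By time-invariance the tail cost equals $J(z(\tau),u_L)<\infty$, while the finite segment $\int_0^\tau(\cdots)\,dt$ is finite because $x$ is bounded and $u_L$ continuous there; adding the two shows $J(x_0,u_L)<\infty$, and thus $y,u_L\in L_2$, for every $z_0\in\mathcal{M}_2$. This promotes the variation-of-constants estimate to all initial states and legitimizes the invocation of Datko's theorem, completing the argument.
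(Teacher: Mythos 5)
Your proof follows essentially the same route as the paper's: both use the exponential detectability of $(\mathbf{A},\mathbf{C})$ to produce an exponentially stable output-injection semigroup, rewrite the closed-loop orbit via variation of constants with the $L_2$ forcing term $\mathbf{F}y+\mathbf{B}u_L$ (the paper writes this as the perturbation $(\mathbf{FC}-\mathbf{BK})\mathbf{T}_{BK}(w)z_0$), apply a Young-type convolution estimate to conclude $\|\mathbf{T}_{BK}(\cdot)z_0\|\in L_2$, and invoke Datko's theorem for exponential stability. Your extra step upgrading the finite-cost hypothesis from $z_0\in\mathcal{D}$ to all of $\mathcal{M}_2$ via the smoothing property of the retarded semigroup is a point of care that the paper's proof leaves implicit when it applies \cite[Lemma 5.1.2]{Curtain1995}, which formally requires square-integrability of the orbit for every initial state in the Hilbert space.
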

\begin{proof}
Define $\mathbf{C} = [{Q}^{\frac{1}{2}}, \mathbf{0}] \in \mathcal{L}(\mathcal{M}_2,\mathbb{R}^n)$, and then $y(t) = {Q}^{\frac{1}{2}}x(t) = \mathbf{C}z(t)$. Due to the exponential detectability of $(\mathbf{A}, \mathbf{C})$, there exists $\mathbf{F} \in \mathcal{L}(\mathbb{R}^n, \mathcal{M}_2)$, such that $\mathbf{A}-\mathbf{F}\mathbf{C}$ is exponentially stable. Define $\mathbf{T}_{BK}(t)$ and $\mathbf{T}_{FC}(t)$ as the semigroups for $\mathbf{A}-\mathbf{B}\mathbf{K}$ and $\mathbf{A}-\mathbf{F}\mathbf{C}$ respectively. $J(x_0,u_L)<\infty$ implies that 
\begin{align}
\begin{split}
&\int_{0}^{\infty} |\mathbf{CT}_{BK}(t)z_0|^2\mathrm{d}t<\infty,\\
&\int_{0}^{\infty} |{R}^{\frac{1}{2}}\mathbf{KT}_{BK}(t)z_0|^2\mathrm{d}t<\infty.   
\end{split}
\label{eq: L2 of CT and BK}
\end{align}
According to \cite[Theorem 3.2.1]{Curtain1995}, we have 
\begin{align}
\begin{split}
    \mathbf{T}_{BK}&(t)z_0 = \mathbf{T}_{FC}(t)z_0 +\\
    & \int_{0}^{t}\mathbf{T}_{FC}(t-w)(\mathbf{FC-BK})\mathbf{T}_{BK}(w)z_0\mathrm{d}w.
\end{split}
\end{align}
Taking norm of the above equation yields
\begin{align}
\begin{split}
    &\|\mathbf{T}_{BK}(t)z_0\| \leq \|\mathbf{T}_{FC}(t)z_0\| + \int_{0}^{t}\|\mathbf{T}_{FC}(t-w) [\|\mathbf{F}\|\\
    &\quad |\mathbf{CT}_{BK}(w)z_0| + \|\mathbf{B}R^{-\frac{1}{2}}\| |{R}^{\frac{1}{2}}\mathbf{KT}_{BK}(w)z_0 |]\mathrm{d}w.
\end{split}
\end{align}
By (\ref{eq: L2 of CT and BK}), $|\mathbf{CT}_{BK}(\cdot)z_0|,|{R}^{\frac{1}{2}}\mathbf{KT}_{BK}(\cdot)z_0| \in L_2([0,\infty),\mathbb{R})$. Furthermore, since $\mathbf{T}_{FC}(t)$ is exponentially stable, $\|\mathbf{T}_{FC}(\cdot)\| \in L_1([0,\infty),\mathbb{R}) \cap L_2([0,\infty),\mathbb{R})$. Hence, according to \cite[Lemma A6.6]{Curtain1995}, $\|\mathbf{T}_{BK}(\cdot)z_0\| \in L_2([0,\infty),\mathbb{R})$, which implies that $\mathbf{T}_{BK}(t)$ is globally exponentially stable \cite[Lemma 5.1.2]{Curtain1995}.
\end{proof}

\begin{theorem}
Given the admissible control $u_1(x_t)$, for $P_{0,i}$,  $P_{1,i}(\theta)$, $P_{2,i}(\xi,\theta)$, and $u_{i+1}(x_t)$ obtained by solving (\ref{eq: policy evaluation}) and (\ref{eq: policy improvement}), and for all $i \in \mathbb{N}_+$, the following properties hold.
\begin{enumerate}
    \item $V^*(x_0) \leq V_{i+1}(x_0) \leq V_{i}(x_0)$;
    \item $u_{i+1}(x_t)$ is admissible;
    \item $V_i(x_0)$ and $u_i(x_t)$ converge to $V^*(x_0)$ and $u^*(x_t)$ respectively.
\end{enumerate}
\label{thm: model-based PI}
\end{theorem}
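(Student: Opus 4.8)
The plan is to argue by induction on $i$, using the closed-form derivative identity \eqref{eq:simplyuExp} together with Lemma \ref{lm: stability} to propagate admissibility, and then to extract convergence from the resulting monotonicity and the uniqueness asserted in Lemma \ref{lm: Sufficient Optimality}. The base case is handed to us: $u_1$ is admissible by hypothesis, so $V_1(x_0)=J(x_0,u_1)<\infty$ and $V_1\ge 0$. For the inductive step I would assume $u_i$ is admissible, so that the functional $V_i$ produced by the policy evaluation \eqref{eq: policy evaluation} coincides with the finite, nonnegative cost-to-go of $u_i$, and is therefore represented by the self-adjoint quadratic form \eqref{eq: Vi}.

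The heart of the argument is to drive the plant \eqref{eq: Time delay system} with the \emph{updated} policy $u=u_{i+1}$ and specialize \eqref{eq:simplyuExp}. Completing the square in the control variables turns the right-hand side into
\begin{align}
\dot V_i(x_t) = -x^\top Q x - u_{i+1}^\top R u_{i+1} - (u_i-u_{i+1})^\top R (u_i-u_{i+1}) \le -x^\top Q x - u_{i+1}^\top R u_{i+1}, \nonumber
\end{align}
where the inequality uses $R>0$. Integrating from $0$ to $T$ and discarding the nonnegative term $V_i(x_T)\ge 0$ yields $\int_0^T (x^\top Q x + u_{i+1}^\top R u_{i+1})\,\mathrm{d}t \le V_i(x_0)$ for every $T$, hence $J(x_0,u_{i+1})\le V_i(x_0)<\infty$. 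Since $u_{i+1}$ is a bounded linear feedback $-\mathbf{K}_{i+1}z(t)$ by \eqref{eq: policy improvement}, the finiteness of its cost lets me invoke Lemma \ref{lm: stability} to conclude that the closed loop is globally exponentially stable; this is exactly property (2). Exponential stability in turn forces $V_i(x_T)\to 0$, so integrating the identity \emph{exactly} gives $V_i(x_0)-V_{i+1}(x_0) = \int_0^\infty (u_i-u_{i+1})^\top R (u_i-u_{i+1})\,\mathrm{d}t \ge 0$, while the optimality of $V^*$ in Lemma \ref{lm: Sufficient Optimality} gives $V^*(x_0)\le V_{i+1}(x_0)$; together these yield property (1). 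Note the ordering matters: admissibility of $u_{i+1}$ must be secured before one may assert $V_{i+1}=J(\cdot,u_{i+1})$ and $V_i(x_T)\to 0$.

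For property (3), I would read the chain $V^*(x_0)\le V_{i+1}(x_0)\le V_i(x_0)$ at the operator level: writing $V_i(x_0)=\langle z_0,\mathbf{P}_i z_0\rangle$ for the self-adjoint $\mathbf{P}_i\in\mathcal{L}(\mathcal{M}_2)$ assembled from $P_{0,i},P_{1,i},P_{2,i}$, the bounds read $\mathbf{P}^*\le\mathbf{P}_{i+1}\le\mathbf{P}_i$. A monotone nonincreasing sequence of self-adjoint operators bounded below converges strongly to a limit $\mathbf{P}_\infty$, which transfers to convergence of the kernels $P_{0,i}\to P_{0,\infty}$, $P_{1,i}\to P_{1,\infty}$, $P_{2,i}\to P_{2,\infty}$ in the appropriate $L_2$ sense, and through \eqref{eq: policy improvement} to $K_{0,i}\to R^{-1}B^\top P_{0,\infty}$ and $K_{1,i}(\cdot)\to R^{-1}B^\top P_{1,\infty}(\cdot)$. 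Passing to the limit in \eqref{eq: policy evaluation} and substituting these feedback relations collapses the policy-evaluation PDEs exactly onto the Riccati system \eqref{eq: Ricatti Equation}; by the uniqueness in Lemma \ref{lm: Sufficient Optimality}, the limit must be $(P_0^*,P_1^*,P_2^*)$, so $V_i\to V^*$ and $u_i\to u^*$.

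I expect the delicate point to be this last limit rather than the square-completion algebra. Unlike Kleinman's finite-dimensional setting, the monotonicity I obtain is only of the scalars $V_i(x_0)$ (equivalently of the operators $\mathbf{P}_i$), and I must upgrade strong operator convergence to convergence of the functional kernels $P_{1,i}(\theta)$ and $P_{2,i}(\xi,\theta)$ in a topology strong enough to justify taking limits inside the derivatives $\mathrm{d}P_{1,i}/\mathrm{d}\theta$ and $\partial_\xi P_{2,i},\ \partial_\theta P_{2,i}$ that appear in \eqref{eq: policy evaluation}; securing this regularity is where the real work lies.
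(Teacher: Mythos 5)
Your proposal follows essentially the same route as the paper's proof: induction using the derivative identity \eqref{eq:simplyuExp}, completing the square along trajectories of $u_{i+1}$, invoking Lemma \ref{lm: stability} to secure admissibility of $u_{i+1}$ from finiteness of its cost, and then strong convergence of the monotone, bounded-below sequence of self-adjoint operators $\mathbf{P}_i$ combined with uniqueness of the solution to \eqref{eq: Ricatti Equation}. The regularity point you flag at the end is what the paper addresses via an equicontinuity argument on $\{P_{1,i}\}$ and $\{P_{2,i}\}$ to upgrade pointwise to uniform convergence, and your finite-$T$ truncation before stability is established is a minor, slightly cleaner variation of the paper's direct integration to infinity.
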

\begin{proof}
Along the trajectories of \eqref{eq: Time delay system}, $\dot{V}_i(x_t)$ is derived in \eqref{eq: Vdot}. In detail, \eqref{eq:partialInt} is derived by partial integration and the fact that $\frac{\mathrm{d}}{\mathrm{d}t}x(t+\theta) = \frac{\mathrm{d}}{\mathrm{d}\theta}x(t+\theta) = \frac{\mathrm{d}}{\mathrm{d}\xi}x(t+\xi)$, \eqref{eq:CombSimilarTerms} is derived by combining the similar items and the fact that $P_{2,1}^\top(\xi,\theta) = P_{2,1}(\theta,\xi)$, \eqref{eq:simplyPI} is derived by plugging \eqref{eq: policy evaluation} into \eqref{eq:CombSimilarTerms}, and \eqref{eq:simplyuExp} is derived by substituting the expression of $u_i$ and \eqref{eq: policy improvement} into \eqref{eq:simplyPI}. The properties 1) and 2) are proved by induction.

When $i = 1$ and system (\ref{eq: Time delay system}) is driven by the admissible control $u_1(x_t)$, according to (\ref{eq: Vdot}), we have
\begin{align}\label{eq:V1dot}
    \dot{V}_1(x_t) =  -x^\top Q x - u_1^\top Ru_1.
\end{align}
Since $u_1$ is admissible, $x(\infty)=0$, and integrating \eqref{eq:V1dot} from $0$ to $\infty$ yields
\begin{align}
    V_1(x_0) &= \int_{0}^{\infty} x^\top(t)Qx(t) + u^\top_1(t)Ru_1(t) \mathrm{d}t \nonumber \\
    &= J(x_0,u_1)<\infty.
\end{align}
Along the trajectories of (\ref{eq: Time delay system}) driven by $u_2(x_t)$, by (\ref{eq: Vdot}), we have 
\begin{align}
\begin{split}
    \Dot{V}_1&(x_t) = -x^\top Q x - u_1^\top R u_1 - 2u^\top_2R(u_2- u_1) \\
    &= -x^\top Q x - u_2^\top R u_2 -(u_2 - u_1)^\top R (u_2 - u_1).
\end{split}
\label{eq: V1Dotu2}
\end{align}
Integrating both sides of (\ref{eq: V1Dotu2}) from $0$ to $\infty$ yields
\begin{small}
\begin{align}\label{eq:V2finite}
    J(x_0, u_2) &= V_1(x_0) - V_1(x_{\infty}) - \int_{0}^{\infty}(u_2 - u_1)^\top R (u_2 - u_1)\mathrm{d}t \nonumber\\
    &\leq V_1(x_0)<\infty
\end{align}
\end{small}%
Since $J(x_0, u_2)$ is finite, by Lemma \ref{lm: stability}, $u_2$ is a globally exponentially stabilizing controller. Consequently, by Definition \ref{def:admissible}, $u_2$ is admissible. By \eqref{eq: Vdot}, when system \eqref{eq: Time delay system} is driven by $u_2$,
\begin{align}\label{eq:V2dot}
    \Dot{V}_2(x_t) = -x^\top Q x - u_2^\top R u_2.
\end{align} 
Since $u_2$ is admissible, integrating \eqref{eq:V2dot} from $0$ to $\infty$, we have $J(x_0,u_2) = V_2(x_0)$. Therefore, from \eqref{eq:V2finite}, we have $V_2(x_0) \leq V_1(x_0)$.

When $i>1$, assume 1) and 2) hold. When system \eqref{eq: Time delay system} is driven by $u_i$, the expression of $\dot{V}_i(x_t)$ is
\begin{align} \label{eq:Vidot_ui}
    \dot{V}_i(x_t) = -x^\top Q x - u_i^\top R u_i.
\end{align}
Noting the fact that $u_i$ is admissible and integrating \eqref{eq:Vidot_ui} from $0$ to $\infty$, we have 
\begin{align}
    V_i(x_0) &= \int_{0}^{\infty} x^\top(t)Qx(t) + u^\top_i(t)Ru_i(t) \mathrm{d}t \nonumber\\
    &= J(x_0,u_i) <\infty.
\end{align}
By \eqref{eq: Vdot}, along the state trajectories of (\ref{eq: Time delay system}) driven by $u_{i+1}$, 
\begin{align}
\begin{split}
    \Dot{V}_i(x_t) =& -x^\top Q x - u_{i+1}^\top R u_{i+1} \\& \qquad \qquad \qquad -(u_{i+1} - u_i)^\top R (u_{i+1} - u_i) .   
\end{split}
\label{eq: dot V_i along u_i+1}
\end{align}
Integrating both sides of (\ref{eq: dot V_i along u_i+1}) from $0$ to $\infty$ yields 
\begin{align}\label{Ji+1finite}
    &J(x_0, u_{i+1})=V_i(x_0) - V_i(x_\infty) \nonumber\\
    &\quad -\int_{0}^{\infty}(u_{i+1} - u_i)^\top R (u_{i+1} - u_i)\mathrm{d}t \leq V_i(x_0)<\infty.
\end{align}
Since the cost of $u_{i+1}$ is finite, by Lemma \ref{lm: stability}, $u_{i+1}$ is a globally exponentially stabilizing controller. Consequently, by Definition \ref{def:admissible}, $u_{i+1}$ is admissible. Along the state trajectories of system \eqref{eq: Time delay system} driven by $u_{i+1}$, by \eqref{eq: Vdot}, 
\begin{align}\label{eq:Vi+1dot}
    \Dot{V}_{i+1}(x_t) = -x^\top Q x - u^\top_{i+1} R u_{i+1}.
\end{align}
Since $u_{i+1}$ is admissible, integrating \eqref{eq:Vi+1dot} from $0$ to $\infty$ yields $V_{i+1}(x_0) = J(x_0,u_{i+1})$. 
Hence, $V_{i+1}(x_0) \leq V_i(x_0)$ is obtained by \eqref{Ji+1finite}. Furthermore, since $V^*(x_0) = J(x_0,u^*)$ is the minimal value of the performance index by Lemma \ref{lm: Sufficient Optimality}, for any $i \in \mathbb{N}_+$, $V^*(x_0) \leq V_{i}(x_0)$. Therefore, the proof of 1) and 2) is completed by induction.

With the model-based PI algorithm (\ref{eq: policy evaluation}) and (\ref{eq: policy improvement}), we have
\begin{align}
    V^*(x_0) \leq \cdots \leq V_i(x_0) \leq \cdots \leq V_1(x_0).
\label{eq: ineq of V}
\end{align}
Define $\mathbf{P}_i \in \mathcal{L}(\mathcal{M}_2)$, such that for any $z_0 = \begin{bmatrix}x_0(0) \\ x_0(\cdot) \end{bmatrix} \in \mathcal{D}$, $\mathbf{P}_i z_0$ can be expressed as
\begin{align}
    \mathbf{P}_i z_0 &= \begin{bmatrix}P_{0,i}x(0) + \int_{-\tau}^{0} P_{1,i}(\theta)x_0(\theta)\mathrm{d}\theta \\ \int_{-\tau}^{0} P_{2,i}(\cdot,\theta)x_0(\theta)\mathrm{d}\theta + P^\top_{1,i}(\cdot)x(0) \end{bmatrix}.
\label{eq: operator Pi}
\end{align}
It is easy to check that $ \mathbf{P}_i$ is symmetric, positive semi-definite, and $V_i(x_0) = \langle z_0, \mathbf{P}_i z_0 \rangle$. Furthermore, according to (\ref{eq: ineq of V}), for any $i \in \mathbb{N}_+$, $\mathbf{P}_{i+1} \leq \mathbf{P}_i\leq \mathbf{P}_{i-1}$. According to \cite[Theorem 6.3.2]{Eidelman_book}, there exists $\mathbf{P}_p = \mathbf{P}_p^\top > 0$, such that for all $z_0 \in \mathcal{M}_2$, we have
\begin{align}
    \lim_{i \to \infty}\mathbf{P}_i z_0 = \mathbf{P}_pz_0.
\end{align}
Therefore, $P_{0,i}$, $P_{1,i}(\theta)$ and $P_{2,i}(\xi,\theta)$ finally pointwisely converge to $P_{0,p}$, $P_{1,p}(\theta)$, and $P_{2,p}(\xi,\theta)$ respectively. When $\mathbf{P}_i$ converges, by the policy evaluation step \eqref{eq: policy evaluation}, $P_{0,p}$, $P_{1,p}(\theta)$ and $P_{2,p}(\xi,\theta)$ satisfy
\begin{align}
\begin{split}
    & (A - BK_{0,p})^{\top} P_{0,p} + P_{0,p} (A - BK_{0,p}) + Q \\
    &\quad+ K^\top_{0,p} R K_{0,p} + P_{1,p}(0) + P_{1,p}^\top(0)  = 0 ,\\
    & \frac{\mathrm{d}P_{1,p}(\theta)}{\mathrm{d}\theta} = (A - BK_{0,p})^{\top} P_{1,p}(\theta) - P_{0,p}B K_{1,p}(\theta) \\
    &\quad +  K^\top_{0,p}R K_{1,p}(\theta) + P_{2,p}(0,\theta) ,\\
    & \partial_\xi P_{2,p}(\xi,\theta) + \partial_\theta P_{2,p}(\xi,\theta) = K^\top_{1,p}(\xi)R K_{1,p}(\theta) \\
    &\quad - 2K^\top_{1,p}(\xi)B^\top P_{1,p}(\theta) ,\\
    & P_{1,p}(-\tau) = P_{0,p}A_d ,\\
    & P_{2,p}(-\tau,\theta) = A_d^\top P_{1,p}(\theta). 
\end{split}    
\label{eq: PolicyEvaluation_pth}
\end{align}
Since $P_{0,i}$, $P_{1,i}(\theta)$ and $P_{2,i}(\xi,\theta)$ converges to $P_{0,p}$, $P_{1,p}(\theta)$ and $P_{2,p}(\xi,\theta)$, respectively, $K_{0,i}$ and $K_{1,i}$ converge to $K_{0,p}$ and $K_{1,p}$. And by the policy improvement step \eqref{eq: policy improvement}, $K_{0,p}$ and $K_{1,p}$ satisfy
\begin{align}
     K_{0,p} = R^{-1}B^\top P_{0,p}, \qquad K_{1,p}(\theta) = R^{-1}B^\top P_{1,p}(\theta).
\label{eq:policyImprove_pth}
\end{align}
Substituting \eqref{eq:policyImprove_pth} into \eqref{eq: PolicyEvaluation_pth}, it is seen that $P_{0,p}$, $P_{1,p}(\theta)$ and $P_{2,p}(\xi,\theta)$ solve the PDEs \eqref{eq: Ricatti Equation}. Due to the uniqueness of the solution to (\ref{eq: Ricatti Equation}), $P_{0,i}$, $P_{1,i}(\theta)$ and $P_{2,i}(\xi,\theta)$ pointwisely converge to $P_{0}^*$, $P_{1}^*(\theta)$ and $P_{2}^*(\xi,\theta)$. Since both $P_{1,i}(\theta)$ and $P_{2,i}(\xi,\theta)$ are continuously differentiable, $ \{P_{1,i}(\theta): i \in \mathbb{N}_+\}$ and $\{P_{2,i}(\xi,\theta): i \in \mathbb{N}_+\}$ are equicontinuous, which leads to the uniform convergence by \cite[Chapter 4, Theorem 16]{Book_Pugh}. Hence, 3) can be proved.
\end{proof}

Notice that although (\ref{eq: policy evaluation}) is linear with respect to $P_{0,i}$, $P_{1,i}$, and $P_{2,i}$, due to the existence of PDEs, solving the analytical solution to (\ref{eq: policy evaluation}) is still non-trivial. Besides, the accurate knowledge of system matrices $A$, $A_d$, and $B$ is required to implement the model-based PI, and in practice due to the complex structure of the system, it is often hard to derive such an accurate model. Therefore, in the next section, a data-driven PI algorithm is proposed.

\begin{remark}
When $A_d = 0$, (\ref{eq: Time delay system}) is degraded to the normal linear time-invariant systems. According to (\ref{eq: policy evaluation}) and (\ref{eq: policy improvement}), we can see that $P_{1,i}(\theta)=0$, $P_{2,i}(\xi, \theta)=0$, and  $K_{1,i}(\theta)=0$. As a consequence, (\ref{eq: policy evaluation}) and (\ref{eq: policy improvement}) are same as the model-based PI method in \cite{Kleinman1968}. Therefore, the proposed model-based PI algorithm is a generalization of the celebrated Kleinman algorithm to linear time-delay systems.  
\end{remark}
\begin{remark}
In \cite{Burns2008}, the model-based PI is developed for linear infinite-dimensional systems in Hilbert space. Although the linear time-delay system is one of the infinite-dimensional systems, the concrete expression of PI for linear time-delay systems is not given in \cite{Burns2008}, and as a consequence, the PI developed in \cite{Burns2008} cannot be directly applied to solve the PDEs (\ref{eq: Ricatti Equation}). In this paper, the concrete expression of PI is constructed in (\ref{eq: policy evaluation}) and (\ref{eq: policy improvement}), which is one of the major contributions in this paper. Besides, it can be checked that at each iteration, $\mathbf{P}_i$ defined in (\ref{eq: operator Pi}) satisfies the PI update equations in \cite{Burns2008}, which is another way to prove the validity of the proposed PI theoretically.
\end{remark}
\begin{remark}
As shown in \cite{Burns2008}, the convergence rate of PI algorithm in the Hilbert space is quadratic, and therefore, the proposed model-based PI for time-delay systems has the same quadratic convergence rate.
\end{remark}
\section{Data-driven Policy Iteration}

The purpose of this section is to propose a corresponding data-driven PI method that does not require the accurate knowledge of system (\ref{eq: Time delay system}) to solve Problem 1. The input-state trajectory data of system \eqref{eq: Time delay system} is required for the data-driven PI, that is the continuous-time trajectories of $x(t)$ and $u(t)$ sampled from system \eqref{eq: Time delay system} within the interval $[t_1, t_{L+1}]$ is applied to train the control policy. In this section, $x(t)$ denotes the sampled state of system \eqref{eq: Time delay system} driven by the exploratory input $u(t)$.

Define $v_i(t) = u(t) - u_i(x_t)$. By \eqref{eq: Vdot}, along the trajectories of system (\ref{eq: Time delay system}) driven by $u$, 
\begin{align}\label{eq:Vdotvi}
    \dot{V}_i(x_t)= -x^\top Q x - u_i^\top R u_i -2u_{i+1}^\top R v_i.
\end{align}
Let $[t_k, t_{k+1} ]$ denote the $k$th segment of the interval $[t_1, t_{L+1}]$. Integrating both sides of \eqref{eq:Vdotvi} from $t_{k}$ to $t_{k+1}$ yields 
\begin{align}
\begin{split}
   &V_i(x_{t_{k+1}}) - V_i(x_{t_k}) = \\
   & \qquad \qquad \quad \int_{t_k}^{t_{k+1}} -x^\top Q x - u_i^\top R u_i - 2u_{i+1}^\top Rv_i \mathrm{d}t.
\end{split}
\label{eq: Diff Vi}
\end{align}
\begin{figure*}[b]
\noindent\makebox[\linewidth]{\rule{\textwidth}{0.4pt}}
\begin{align}
\begin{split}
    &\left[ x^\top(t)P_{0,i}x(t) + 2x^\top(t) \int_{-\tau}^{0} P_{1,i}(\theta)x_t(\theta)\mathrm{d}\theta + \int_{-\tau}^{0}\int_{-\tau}^{0} x_t^\top(\xi)P_{2,i}(\xi,\theta)x_t(\theta)\mathrm{d}\xi d \theta \right]_{t = t_k}^{t_{k+1}} \\
    &- 2\int_{t_k}^{t_{k+1}} \left(x^\top(t)K^\top_{0,i+1} + \int_{-\tau}^{0} x_t^\top(\theta)K^\top_{1,i+1}(\theta)\mathrm{d}\theta  \right)  Rv_i(t) \mathrm{d}t=  -\int_{t_k}^{t_{k+1}} x^\top Q x + u_i^\top R u_i \mathrm{d}t.
\end{split}    
\label{eq: Diff V with P}
\end{align}
\vspace{0pt}
\end{figure*}%
Plugging the expressions of $u_{i+1}$ in \eqref{eq: policy improvement} and $V_i$ in \eqref{eq: Vi} into (\ref{eq: Diff Vi}), one can obtain (\ref{eq: Diff V with P}), which is instrumental for the development of data-driven PI. 

As seen in \eqref{eq: policy evaluation} and \eqref{eq: policy improvement}, $K_{1,i}(\theta)$ and $P_{1,i}(\theta)$ are continuous functions defined on the interval $[-\tau,0]$; $P_{2,i}(\xi,\theta)$ is a continuous function defined over the set $[-\tau,0]^2$. Next, we will use the linear combinations of the basis functions to approximate these continuous functions, such that only the weighting matrices of the basis functions should be determined for the function approximation. Let $\Phi(\theta)$, $\Lambda(\xi, \theta)$, and $\Psi(\xi, \theta)$ denote the $N$-dimensional vectors of linearly independent basis functions. To simplify the notation, we choose the same number of basis functions for $\Phi$, $\Lambda$ and $\Psi$. According to the approximation theory \cite{powell_1981}, the following equations hold
\begin{align}
\begin{split}
    &\text{vecs}(P_{0,i}) = W_{0,i} ,\\
    &\text{vec}(P_{1,i}(\theta)) = W^N_{1,i} \Phi(\theta) + e^N_{\Phi,i}(\theta) ,\\
    &\text{diag}(P_{2,i}(\xi, \theta)) = W^N_{2,i} \Psi(\xi,\theta) + e^N_{\Psi,i}(\xi, \theta),\\
    &\text{vecu}(P_{2,i}(\xi, \theta)) = W^N_{3,i} \Lambda(\xi,\theta) + e^N_{\Lambda,i}(\xi, \theta) ,\\
    &\text{vec}(K_{0,i}) = U_{0,i} ,\\
    &\text{vec}(K_{1,i}(\theta)) = U^N_{1,i} \Phi(\theta) + e^N_{K,i}(\theta),
\end{split}    
\label{eq: approximate formula}
\end{align}
where $W_{0,i} \in \mathbb{R}^{n_1}$, $n_1 = \frac{n(n+1)}{2}$, $W^N_{1,i} \in \mathbb{R}^{n^2 \times N}$,  $W^N_{2,i} \in \mathbb{R}^{n \times N}$, $W^N_{3,i} \in \mathbb{R}^{n_2 \times N}$, $n_2 = \frac{n(n-1)}{2}$,  $U_{0,i} \in \mathbb{R}^{nm}$, and $U^N_{1,i} \in \mathbb{R}^{nm \times N}$ are weighting matrices of the basis functions. $e^N_{\Phi,i}(\theta) \in \mathcal{C}^0([-\tau,0], \mathbb{R}^{n^2})$, $e^N_{\Psi,i}(\xi, \theta) \in \mathcal{C}^0([-\tau,0]^2,\mathbb{R}^{n})$, $e^N_{\Lambda,i}(\xi, \theta) \in \mathcal{C}^0([-\tau,0]^2,\mathbb{R}^{n_2})$, and $e^N_{K,i}(\theta) \in \mathcal{C}^0([-\tau,0],\mathbb{R}^{mn})$ are approximation truncation errors. Therefore, according to the uniform approximation theory, as $N \rightarrow \infty$, the truncation errors converge uniformly to zero, i.e. for any $\eta>0$, there exists $N^*\in \mathbb{N}_+$, such that if $N>N^*$, the following inequalities hold 
\begin{align}
\begin{split}
    &\lVert e^N_{\Phi,i}(\theta)\lVert_\infty \leq \eta, \quad \quad   \lVert e^N_{K,i}(\theta)\lVert_\infty \leq \eta,\\
    & \lVert e^N_{\Psi,i}(\xi, \theta)\lVert_\infty \leq \eta, \quad  \lVert e^N_{\Lambda,i}(\xi, \theta)\lVert_\infty \leq \eta.
\end{split}
\label{eq: trun err}
\end{align}

Therefore, the key idea of data-driven PI is that $W_{j,i} (j=0,\cdots,3)$ and $U_{j,i} (j=0,1)$ are directly approximated by the data collected from system (\ref{eq: Time delay system}). Define $\Upsilon_i^N$ as the composite vector of the weighting matrices, i.e. 
\begin{align}
\begin{split}
        & \Upsilon^N_i = \left[W_{0,i}^\top , \text{vec}^\top(W^N_{1,i}), \text{vec}^\top(W^N_{2,i}), \text{vec}^\top(W^N_{3,i})\right.\\
        & \qquad \quad \left.U^\top_{0,i+1} ,\text{vec}^\top(U^N_{1,i+1})\right]^\top.
\end{split}
\label{eq: Upsilon}
\end{align}
Let $\hat{\Upsilon}^N_i$ be the approximation of $\Upsilon^N_i$, and then, the approximations of ${P}_{j,i} (j=0,1,2)$ can be reconstructed by
\begin{align}
\begin{split}
    &\hat{P}_{0,i} = \text{vec}^{-1}([\hat{\Upsilon}^N_i]_{1 \sim n_1}), \\
    &\hat{W}^N_{1,i} = \text{vec}^{-1}([\hat{\Upsilon}^N_i]_{n_1+1 \sim n_1+n^2N}),\\
    &\hat{W}^N_{2,i} = \text{vec}^{-1}([\hat{\Upsilon}^N_i]_{n_1+n^2N+1 \sim n_1+n^2N+nN}),\\
    &\hat{W}^N_{3,i} = \text{vec}^{-1}([\hat{\Upsilon}^N_i]_{n_1+n^2N+nN+1 \sim n_3-1}),\\
    &\hat{P}_{1,i}(\theta) = \text{vec}^{-1}(\hat{W}^N_{1,i} \Phi(\theta)),\\
    &\text{diag}(\hat{P}_{2,i}(\xi,\theta)) =  \hat{W}^N_{2,i} \Psi(\xi, \theta),\\
    &\text{vecu}(\hat{P}_{2,i}(\xi,\theta)) =  \hat{W}^N_{3,i} \Lambda(\xi, \theta),\\    
\end{split}
\label{eq: reconstruct P}
\end{align}
where $n_3 = n_1+n^2N+nN+n_2N+1$. Furthermore, $\hat{K}_{0,i+1}$ and $\hat{K}_{1,i+1}(\theta)$, the approximations of  ${K}_{0,i+1}$ and ${K}_{1,i+1}(\theta)$ respectively, can be reconstructed by
\begin{align}
\begin{split}
    &\hat{K}_{0,i+1} = \text{vec}^{-1}([\hat{\Upsilon}^N_i]_{n_3 \sim n_4}),\\
    &\hat{U}_{1,i+1} = \text{vec}^{-1}([\hat{\Upsilon}^N_i]_{n_4+1 \sim n_5}),\\
    &\hat{K}_{1,i+1}(\theta) = \text{vec}^{-1}(\hat{U}_{1,i}\Phi(\theta)),
\end{split}
\label{eq: reconstruct K}
\end{align}
where $n_4 = n_3+nm$, and $n_5 = n_4+nmN$. As a consequence, $\hat{u}_i(x_t)$, the approximation of $u_i(x_t)$, can be expressed as $\hat{u}_i(x_t) = -\hat{K}_{0,i}x(t) - \int_{-\tau}^{0} \hat{K}_{1,i}(\theta) x_t(\theta) \mathrm{d}\theta$.

Based on the approximations in \eqref{eq: approximate formula}, (\ref{eq: Diff V with P}) is transferred to a linear equation with respect to $\hat{\Upsilon}_{i}^{N}$. Then, the unknown vector ${\Upsilon}_{i}^{N}$ is solved by linear regression, and consequently, $P_{j,i}(j=0,1,2)$ and $K_{j,i+1}(j = 1,2)$ can be approximated by \eqref{eq: reconstruct P} and \eqref{eq: reconstruct K}. In detail, let $\hat{v}_i = u-\hat{u}_i$, $\tilde{u}_i = \hat{u}_i-u_i$. Define the data-constructed matrices $\Gamma_{\Phi xx}(t)$, $\Gamma_{\Psi xx}(t)$, $\Gamma_{\Lambda xx}(t)$, $G_{x\hat{v}_i,k}$, and $G_{\Phi x\hat{v}_i,k}$ as 
\begin{small}
\begin{align}\label{eq: definition of Gm}
    &\Gamma_{\Phi xx}(t) = \int_{-\tau}^{0} \Phi^\top(\theta) \otimes x^\top_t(\theta) \otimes x^\top(t)   \mathrm{d}\theta, \nonumber\\
    &\Gamma_{\Psi xx}(t) = \int_{-\tau}^{0}\int_{-\tau}^{0} \Psi^\top(\xi,\theta) \otimes \text{vecd}^\top(x_t(\xi), x_t(\theta)) 
    \mathrm{d}\xi \mathrm{d}\theta , \nonumber\\
    &\Gamma_{\Lambda xx}(t) = \int_{-\tau}^{0}\int_{-\tau}^{0} \Lambda^\top(\xi,\theta) \otimes \text{vecp}^\top(x_t(\xi), x_t(\theta)) 
    \mathrm{d}\xi \mathrm{d}\theta ,\\
    &G_{x\hat{v}_i,k} = \int_{t_k}^{t_{k+1}} (x^\top(t) \otimes \hat{v}_i^\top(t))(I_n \otimes R)\mathrm{d}t, \nonumber\\
    &G_{\Phi x\hat{v}_i,k} = \int_{t_k}^{t_{k+1}} \int_{-\tau}^{0} \Phi^\top(\theta) \otimes((x_t^\top(\theta) \otimes \hat{v}_i^\top(t)) (I_n \otimes R))  \mathrm{d}\theta \mathrm{d}t. \nonumber
\end{align}
\end{small}
\begin{figure*}[b]
\noindent\makebox[\linewidth]{\rule{\textwidth}{0.4pt}}
\vspace{0pt}
\begin{align}
\begin{split}
    &x^\top(t){P}_{0,i}x(t) = \text{vecv}^\top(x(t))W_{0,i},\\
    &x^\top(t) \int_{-\tau}^{0} P_{1,i}(\theta)x_t(\theta)d\theta =  \Gamma_{\Phi xx}(t)\text{vec}(W_{1,i}) + \int_{-\tau}^{0} x^\top_t(\theta) \otimes x^\top(t) e_{\Phi,i}^N(\theta)   d\theta = \Gamma_{\Phi xx}(t)\text{vec}(W_{1,i}) + \epsilon_{1,i}(t),\\
    &\int_{-\tau}^{0}\int_{-\tau}^{0} x^\top_t(\xi)P_{2,i}(\xi,\theta)x_t(\theta)d\xi d \theta =   \Gamma_{\Psi xx}(t) \text{vec}(W_{2,i}) + \Gamma_{\Lambda xx}(t) \text{vec}(W_{3,i}) \\
    &\qquad + \int_{-\tau}^{0}\int_{-\tau}^{0} \text{vecd}^\top(x_t(\xi), x_t(\theta)) e_{\Psi,i}^N(\xi, \theta) d\xi d\theta +  \int_{-\tau}^{0}\int_{-\tau}^{0}\text{vecp}^\top(x_t(\xi), x_t(\theta)) e_{\Lambda,i}^N(\xi,\theta)
    d\xi d\theta \\
    &\qquad = \Gamma_{\Psi xx}(t) \text{vec}(W_{2,i}) + \Gamma_{\Lambda xx}(t) \text{vec}(W_{3,i}) + \epsilon_{2,i}(t) + \epsilon_{3,i}(t),\\
    &\int_{t_k}^{t_{k+1}} x^\top(t)K^\top_{0,i+1} Rv_i(t) dt = G_{x\hat{v}_i,k} U_{0,i+1} + \int_{t_k}^{t_{k+1}} x^\top(t)K^\top_{0,i+1} R\tilde{u}_i(t) dt  = G_{x\hat{v}_i,k} U_{0,i+1} + \rho^0_{i,k} ,\\ 
    &\int_{t_k}^{t_{k+1}} \int_{-\tau}^{0} x_t^\top(\theta)K^\top_{1,i+1}(\theta)Rv_i(t) d\theta dt = G_{\Phi x\hat{v}_i,k} \text{vec}{(U_{1,i+1})} + \int_{t_k}^{t_{k+1}} \int_{-\tau}^{0}  (x_t^\top(\theta) \otimes v_i^\top(t))(I_n \otimes R) e_{K,i}^N(\theta)  d\theta dt \\
    & \qquad +\int_{t_k}^{t_{k+1}} \int_{-\tau}^{0} x_t^\top(\theta)K^\top_{1,i+1}(\theta)R\tilde{u}_i(t) d\theta dt = G_{\Phi x\hat{v}_i,k} \text{vec}{(U_{1,i+1})}  + \psi_{i,k} + \rho^1_{i,k}.
\end{split}    
\label{eq: kron expres}
\end{align}
\end{figure*}

With the help of (\ref{eq: approximate formula}) and (\ref{eq: definition of Gm}), each term in \eqref{eq: Diff V with P} is expressed linearly with respect to the weighting matrices in (\ref{eq: kron expres}). Then, the data collected from $L$ intervals along the trajectories of system (\ref{eq: Time delay system}) driven by $u$ will be applied to generate the adaptive optimal controller. Let $t_1<t_2<\cdots<t_L<t_{L+1}$ denote the boundaries of each interval.  With the collected data, the following variables are defined
\begin{align}
\begin{split}
    &M_{i,k} = \left[\text{vecv}^\top(x(t))|_{t_k}^{t_{k+1}} , 2\Gamma_{\Phi xx}|_{t_k}^{t_{k+1}}, \Gamma_{\Psi xx}(t)|_{t_k}^{t_{k+1}},\right.\\ 
    & \qquad \quad \left. \Gamma_{\Lambda xx}(t)|_{t_k}^{t_{k+1}}, -2G_{x\hat{v}_i,k} , -2G_{\Phi x\hat{v}_i,k}\right],\\
    &Y_{i,k} = -\int_{t_k}^{t_{k+1}} x^\top Q x + \hat{u}_i^\top R \hat{u}_i \mathrm{d}t, \\
    &E_{i,k} = \left[2\epsilon_{1,i}(t) + \epsilon_{2,i}(t) + \epsilon_{3,i}(t)\right]_{t = t_k}^{t_{k+1}} - 2\psi_{i,k} - 2\rho^0_{i,k} \\
    & \qquad \quad  -2\rho^1_{i,k}-\rho^2_{i,k}, \\
    & M_i = \left[M_{i,1}^\top,\cdots,M_{i,k}^\top, \cdots,M_{i,L}^\top \right]^\top, \\
    &Y_i = \left[Y_{i,1},\cdots,Y_{i,k},\cdots,Y_{i,L} \right] ^\top,\\
    &E_i = \left[E_{i,1},\cdots,E_{i,k},\cdots,E_{i,L} \right] ^\top,
\end{split}
\label{eq: def of kron}
\end{align}
where $\epsilon_{j,i}$ ($j=1,2,3$), $\psi_{i,k}$, $\rho^0_{i,k}$ and $\rho^1_{i,k}$ are induced by the approximation truncation errors defined in (\ref{eq: kron expres}). $\rho^2_{i,k} = \int_{t_k}^{t_{k+1}} \tilde{u}_i^\top R (\hat{u}_i+u_i) \mathrm{d}t$ is also induced by the approximation truncation errors.

\begin{assumption}
Given $N \in \mathbb{N}_+$, there exist $L^* \in \mathbb{N}_+$ and $\alpha>0$, such that for all $L>L^*$ and $i\in \mathbb{N}_+$, the following inequality holds:
\begin{align}
    \frac{1}{L} M_i^\top M_i \geq \alpha I.
\label{eq: assumption for full rank1}
\end{align}
\label{ass: full rank1}
\end{assumption}
\begin{remark}
Assumption \ref{ass: full rank1} is reminiscent of the persistent excitation (PE) condition \cite{Jiang_book2021,astrom1997}. It is needed to guarantee the uniqueness of the least-square solution to \eqref{eq: linear exp Diff V}, and prove the convergence of the proposed data-driven PI algorithm. As in the literature of ADP-based data-driven control \cite{Book_Jiang, Book_Lewis}, one can fulfill it by means of added exploration noise, such as sinusoidal signals and random noise.
\end{remark}%

By \eqref{eq: kron expres} and the definitions of $M_{i,k}$, $Y_{i,k}$ and $E_{i,k}$ in \eqref{eq: def of kron}, \eqref{eq: Diff V with P} is finally transferred as a linear equation with respect to $\Upsilon^N_i$,
\begin{align}\label{eq:kthsegment}
    M_{i,k} \Upsilon^N_i + E_{i,k}  = Y_{i,k}.
\end{align}
Combining equations of \eqref{eq:kthsegment} from $k=1$ to $k=L$, we have
\begin{align}
    M_i \Upsilon^N_i + E_{i}  = Y_i.
\label{eq: linear exp Diff V}
\end{align}
Let $\hat{E}_{i}$ be defined such that 
\begin{align}
     \hat{E}_{i}  = Y_i - M_i \hat{\Upsilon}^N_i.
\label{eq: appro err}
\end{align}

Under Assumption \ref{ass: full rank1}, the method of least squares can be applied to minimize $\hat{E}_i^\top \hat{E}_i$, i.e. $\hat{E}_i^\top \hat{E}_i$ can be minimized by $\hat{\Upsilon}_i$, of which the expression is
\begin{align}
    \hat{\Upsilon}^N_i  = M_i^\dagger Y_i.
\label{eq: lst sq}
\end{align}
With the result of $\hat{\Upsilon}^N_i$ in (\ref{eq: lst sq}),  $\hat{P}_{j,i} (j=0 \cdots 2)$ and $\hat{K}_{j,i}(j=0,1)$ can be reconstructed by (\ref{eq: reconstruct P}) and (\ref{eq: reconstruct K}) respectively.

The detailed data-driven PI algorithm is shown in Algorithm \ref{algo: PI}. As shown from (\ref{eq: def of kron}), $M_i$ and $Y_i$ are constructed by the input-state data sampled along the trajectories of the system. Therefore, no model information is required for the computation of $\hat{\Upsilon}^N_i$. Furthermore, since the trajectories data is collected once and reused throughout the iterations, Algorithm \ref{algo: PI} is called off-policy.

\begin{remark}
Due to the property that $P^\top_{2,i}(\xi, \theta) = P_{2,i}(\theta, \xi)$, the diagonal elements of $P_{2,i}$ satisfy $\text{diag}(P_{2,i}(\xi, \theta)) = \text{diag}(P_{2,i}(\theta, \xi))$. Therefore, the vector of basis functions $\Psi$ must satisfy $\Psi(\xi,\theta) =\Psi(\theta,\xi)$ to approximate such functions.
\end{remark}

\begin{remark}
When calculating $M_i$ and $Y_i$ in Algorithm \ref{algo: PI}, integrals in \eqref{eq: definition of Gm} should be calculated. In practice, we can use Riemann sum to calculate the integrals in \eqref{eq: definition of Gm}, for example midpoint, trapezoid, and Simpson's rules.
\end{remark}
The convergence of the data-driven PI algorithm is studied. The following lemma shows that at each iteration, the value functional and the updated control policy are well approximated, as long as the number of basis functions is large enough.

\begin{lemma}
 Let ${P}_{0,i}, {P}_{1,i}, {P}_{2,i}$ be the solution to (\ref{eq: policy evaluation}), and ${K}_{0,i+1},{K}_{1,i+1}$ be the solution to (\ref{eq: policy improvement}). Under Assumption \ref{ass: full rank1} and given an admissible controller ${u}_{1}(x_t) = -{K}_{0,1}x(t) - \int_{-\tau}^{0}{K}_{1,1}(\theta)x_t(\theta) \mathrm{d}\theta$, for any $i \in \mathbb{N}_+$ and $\eta>0$, there exists a positive integer $N^*>0$, such that if $N>N^*$, the following results hold
\begin{align}
\begin{split}
    &\lvert \hat{P}_{0,i} - {P}_{0,i} \lvert \leq \eta, \quad \lVert \hat{P}_{1,i} - {P}_{1,i} \lVert_\infty \leq \eta,  \\
    &\lVert \hat{P}_{2,i} - {P}_{2,i} \lVert_\infty \leq \eta, \quad \lvert \hat{K}_{0,i+1} - {K}_{0,i+1} \lvert \leq \eta,\\
    &\lVert \hat{K}_{1,i+1} - {K}_{1,i+1} \lVert_\infty \leq \eta.
\end{split}
\end{align}
\label{lm: model-based PI}
\end{lemma}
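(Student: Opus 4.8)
The plan is to reduce the entire statement to a single scalar estimate, $|E_i|\to 0$ as $N\to\infty$, by exploiting the exact linear identity \eqref{eq: linear exp Diff V}, namely $M_i\Upsilon_i^N+E_i=Y_i$, in which $\Upsilon_i^N$ stacks the \emph{true} projection weights of $P_{0,i},P_{1,i},P_{2,i},K_{0,i+1},K_{1,i+1}$ onto the chosen bases (as in \eqref{eq: approximate formula} and \eqref{eq: Upsilon}), while $E_i$ collects every residual term $\epsilon_{1,i},\epsilon_{2,i},\epsilon_{3,i},\psi_{i,k},\rho^0_{i,k},\rho^1_{i,k},\rho^2_{i,k}$. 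First I would observe that Assumption \ref{ass: full rank1} forces $M_i$ to have full column rank, so that $M_i^\dagger M_i=I$ and $\|M_i^\dagger\|_2\le 1/\sqrt{\alpha L}$. Substituting $Y_i=M_i\Upsilon_i^N+E_i$ into the least-squares formula \eqref{eq: lst sq} then yields the clean error identity
\[
\hat\Upsilon_i^N-\Upsilon_i^N=M_i^\dagger E_i,\qquad |\hat\Upsilon_i^N-\Upsilon_i^N|\le \frac{1}{\sqrt{\alpha L}}\,|E_i| .
\]
Hence the lemma splits into two tasks: showing $|E_i|\to 0$, and then converting the weight error into the stated errors on $\hat P_{j,i}$ and $\hat K_{j,i+1}$.

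Next I would bound $|E_i|$ by separating its entries into two groups. The terms $\epsilon_{1,i},\epsilon_{2,i},\epsilon_{3,i},\psi_{i,k}$ are integrals over the fixed compact window $[t_1,t_{L+1}]$ of products of the sampled data $x(t),x_t(\cdot)$ with the truncation errors $e_{\Phi,i}^N,e_{\Psi,i}^N,e_{\Lambda,i}^N,e_{K,i}^N$; since the state trajectory is continuous and therefore bounded on this window, each such term is dominated by an $N$-independent constant times the largest of $\|e_{\Phi,i}^N\|_\infty,\|e_{\Psi,i}^N\|_\infty,\|e_{\Lambda,i}^N\|_\infty,\|e_{K,i}^N\|_\infty$, all of which tend to zero by \eqref{eq: trun err}. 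The remaining terms $\rho^0_{i,k},\rho^1_{i,k},\rho^2_{i,k}$ are the delicate ones, since they depend on $\tilde u_i=\hat u_i-u_i$, the error of the \emph{previously reconstructed} controller; because the true gains $K_{0,i+1},K_{1,i+1}$ and the data are bounded, these residuals are bounded by an $N$-independent constant times $\|\tilde u_i\|$.

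This coupling is exactly why the estimate must be propagated by induction on $i$, which I expect to be the main obstacle. In the base case $i=1$ the controller $u_1$ is known exactly, so $\hat u_1=u_1$, $\tilde u_1=0$, and $\rho^0_{1,k}=\rho^1_{1,k}=\rho^2_{1,k}=0$; then $|E_1|$ is governed purely by the truncation errors and the conclusion follows from the bound of the previous paragraph. For the inductive step I would invoke the conclusion at iteration $i-1$: for any $\eta'>0$ there is $N^*_{i-1}$ with $|\hat K_{0,i}-K_{0,i}|\le\eta'$ and $\|\hat K_{1,i}-K_{1,i}\|_\infty\le\eta'$ whenever $N>N^*_{i-1}$, whence $\|\tilde u_i\|=O(\eta')$ on the bounded data. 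Combining the two groups gives $|E_i|\le C\bigl(\eta'+\max(\|e_{\Phi,i}^N\|_\infty,\|e_{\Psi,i}^N\|_\infty,\|e_{\Lambda,i}^N\|_\infty,\|e_{K,i}^N\|_\infty)\bigr)$ with a constant $C$ independent of $N$, so taking $N$ large drives $|E_i|$, and hence $|\hat\Upsilon_i^N-\Upsilon_i^N|$, below any prescribed threshold.

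Finally I would convert the weight error into the claimed function errors. For $\hat P_{0,i}$ and $\hat K_{0,i+1}$ there is no basis truncation, since $\text{vecs}(P_{0,i})=W_{0,i}$ and $\text{vec}(K_{0,i+1})=U_{0,i+1}$, so $|\hat P_{0,i}-P_{0,i}|$ and $|\hat K_{0,i+1}-K_{0,i+1}|$ are directly dominated by $|\hat\Upsilon_i^N-\Upsilon_i^N|$. For the functional quantities, writing for instance $\text{vec}(\hat P_{1,i}(\theta)-P_{1,i}(\theta))=(\hat W_{1,i}^N-W_{1,i}^N)\Phi(\theta)-e_{\Phi,i}^N(\theta)$, I would bound the first term by $|\hat W_{1,i}^N-W_{1,i}^N|\,\|\Phi\|_\infty$ and the second by $\|e_{\Phi,i}^N\|_\infty$, each of which is below $\eta/2$ for $N$ large, using the boundedness of the normalized bases $\Phi,\Psi,\Lambda$; the estimates for $\hat P_{2,i}$ and $\hat K_{1,i+1}$ are identical through $\Psi,\Lambda$ and $\Phi$. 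Setting $N^*$ equal to the maximum of the finitely many thresholds generated along the way then establishes all five inequalities simultaneously.
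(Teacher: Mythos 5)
Your proposal is correct and follows essentially the same route as the paper: the same reduction to showing $E_i\to 0$ as $N\to\infty$ under Assumption \ref{ass: full rank1}, the same induction on $i$ with the base case $\hat u_1=u_1$ killing the $\rho$ terms and the inductive hypothesis on $\hat K_{0,i},\hat K_{1,i}$ controlling $\tilde u_i$ at later iterations, and the same final conversion via boundedness of $\Phi,\Psi,\Lambda$ plus the truncation errors \eqref{eq: trun err}. The only cosmetic difference is that you obtain the weight-error bound from the identity $\hat\Upsilon_i^N-\Upsilon_i^N=M_i^\dagger E_i$ and $\lVert M_i^\dagger\rVert\le 1/\sqrt{\alpha L}$, whereas the paper argues from the least-squares optimality $\hat E_i^\top\hat E_i\le E_i^\top E_i$ to get $\tilde\Upsilon_i^{N\top}\tilde\Upsilon_i^N\le\frac{4}{\alpha L}E_i^\top E_i$; both are valid.
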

\begin{proof}
Define $\Tilde{\Upsilon}^N_i = {\Upsilon}^N_i - \hat{\Upsilon}^N_i$. Subtracting (\ref{eq: appro err}) from (\ref{eq: linear exp Diff V}) yields 
\begin{align}
    \hat{E}_{i} = M_i \Tilde{\Upsilon}^N_i + {E}_{i}.
\label{eq: err rewritten}
\end{align}
Since $\hat{E}_i^\top \hat{E}_i$ is minimized by the method of least squares, the following inequality holds
\begin{align}
    \frac{1}{L}\hat{E}_{i}^\top \hat{E}_{i} \leq  \frac{1}{L}{E}_{i}^\top {E}_{i}.
\label{eq: err bound}
\end{align}
Furthermore, according to (\ref{eq: err rewritten}) and (\ref{eq: err bound}), we have 
\begin{align}
\begin{split}
    \frac{1}{L} \Tilde{\Upsilon}^{N\top}_i M_i^\top M_i \Tilde{\Upsilon}^N_i &= \frac{1}{L} (\hat{E}_i - E_i)^\top (\hat{E}_i - E_i) \\
    & \leq \frac{4}{L}E_i^\top E_i.
\end{split}
\end{align}
Therefore, via Assumption \ref{ass: full rank1}, the following inequality holds
\begin{align}
    \Tilde{\Upsilon}_i^{N\top} \Tilde{\Upsilon}^N_i \leq \frac{4}{\alpha L} E_i^\top E_i \leq \frac{4}{\alpha} \max_{1\leq k \leq L} E_{i,k}^2. 
\label{eq: leq Upsilon}
\end{align}
Then the lemma will be proved by induction.
When $i=1$, $\hat{u}_1=u_1$, and therefore $\tilde{u}_1=0$, and $\rho^0_{1,k} = \rho^1_{1,k} = \rho^2_{1,k} = 0$. Furthermore, according to (\ref{eq: trun err}), (\ref{eq: kron expres}), and (\ref{eq: def of kron}), we obtain that for any $1\leq k \leq L$ and $\eta>0$, there exists $N^*>0$, such that if $N>N^*$, $E_{i,k}^2\leq \alpha \eta$. Therefore, by \eqref{eq: leq Upsilon}, when $i=1$ 
\begin{align}
    \lim_{N \rightarrow \infty} \Tilde{\Upsilon}_i^{N\top} \Tilde{\Upsilon}^N_i = 0.
\label{eq: norm Upsilon}
\end{align}
According to (\ref{eq: approximate formula}), (\ref{eq: reconstruct P}), (\ref{eq: reconstruct K}), and the boundedness of the functions $\Phi(\theta)$, $\Psi(\xi, \theta)$, $\Lambda(\xi, \theta)$ on the compact interval $\theta,\xi \in [-\tau, 0]$, the lemma holds for $i=1$.

Suppose for some $i>1$, we have $\lim_{N\rightarrow\infty}\lvert \hat{P}_{0,i-1} - {P}_{0,i-1} \lvert =0$, $\lim_{N\rightarrow\infty}\norm{ \hat{P}_{j,i-1} - {P}_{j,i-1}}_\infty =0(j=1,2)$,  $\lim_{N\rightarrow\infty}\lvert \hat{K}_{0,i} - {K}_{0,i} \lvert =0$, and $\lim_{N\rightarrow\infty}\norm{\hat{K}_{1,i} - {K}_{1,i}}_\infty =0$. Then, by the definitions of $\rho^0_{i,k}$, $\rho^1_{i,k}$, and $\rho^2_{1,k}$, we have $\lim_{N\rightarrow\infty}\lvert 2\rho^0_{i,k} + 2\rho^1_{i,k}+\rho^2_{1,k}\lvert =0$ for any $1 \leq k \leq L$. Combining this with (\ref{eq: trun err}), we have $\lim_{N\rightarrow \infty}\lvert E_{i,k} \lvert = 0$. Hence, by \eqref{eq: leq Upsilon}, (\ref{eq: norm Upsilon}) can be derived for $i>1$. The proof is thus completed. 
\end{proof}

\begin{algorithm}[t]
	\caption{Data-driven Policy Iteration}\label{algo: PI}
	\begin{algorithmic}[1]
	    \State Choose the vector of the basis functions $\Phi(\theta)$, $\Psi(\xi,\theta)$, and $\Lambda(\xi,\theta)$.
	    \State Choose the sampling instance $t_{k} \in [t_1, t_{L+1}]$.
		\State Choose input $u = u_1 + e$, with $e$ an exploration signal, to explore system (\ref{eq: Time delay system}) and collect the input-state data $u(t), x(t), t\in [0, t_{L+1}]$. Set the threshold $\delta>0$ and $i=1$.
		\State \textbf{repeat}
		\State \indent Calculate $\hat{u}_i(t) = \hat{u}_i(x_t)$ along the trajectory of $x$. 
		\State \indent Construct $M_i$ and $Y_i$ by (\ref{eq: def of kron}).
		\State \indent \textbf{while} Assumption \ref{ass: full rank1} is not satisfied
		\State \indent \indent Collect more data and insert it into $M_i$ and $Y_i$.
		\State \indent \textbf{end while}
		\State \indent Get $\hat{\Upsilon}^N_i$ by solving (\ref{eq: lst sq}).
		\State \indent Get $\hat{K}_{0,i+1}$ and $\hat{K}_{1,i+1}$ by (\ref{eq: reconstruct K}). 
	    \State \indent $\hat{u}_{i+1}(x_t) = -\hat{K}_{0,i+1}x(t) - \int_{-\tau}^{0} \hat{K}_{1,i+1}(\theta) x_t(\theta) \mathrm{d}\theta$
		\State \indent $i \leftarrow i+1$
		\State \textbf{until} $\lvert \hat{\Upsilon}^N_{i} - \hat{\Upsilon}^N_{i-1} \lvert < \delta $.
		\State Use $\hat{u}_{i}(x_t)$ as the control input.
	\end{algorithmic}
\end{algorithm}

\begin{theorem}
Given an admissible initial controller $u_1$, for any $\eta>0$, there exist integers $i^*>0$ and $N^{**}>0$, such that
\begin{align}
\begin{split}
    & \lvert \hat{P}_{0,i^*} - {P}_{0}^* \lvert \leq \eta, \quad \lVert \hat{P}_{1,i^*} - {P}_{1}^* \lVert_{\infty} \leq \eta, \\
    & \lVert \hat{P}_{2,i^*} - {P}_{2}^* \lVert_{\infty} \leq \eta, \quad \lvert \hat{K}_{0,i^*+1} - {K}_{0}^* \lvert \leq \eta, \\
    & \lVert \hat{K}_{1,i^*+1} - {K}_{1}^* \lVert_{\infty} \leq \eta.
\end{split}    
\end{align}
\label{thm: PI converge}
if $N>N^{**}$.
\end{theorem}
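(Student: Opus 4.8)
The plan is to obtain the result by a two-stage triangle-inequality argument that cleanly separates the two distinct sources of error: the \emph{iteration error} of the model-based scheme (Theorem \ref{thm: model-based PI}) and the \emph{function-approximation error} of the data-driven scheme (Lemma \ref{lm: model-based PI}). For each of the five quantities in the statement I would insert the exact model-based iterate as an intermediate term; for instance, for the functional $\hat P_{1,i^*}$,
\begin{align}
\lVert \hat P_{1,i^*} - P_1^* \rVert_\infty \le \lVert \hat P_{1,i^*} - P_{1,i^*}\rVert_\infty + \lVert P_{1,i^*} - P_1^*\rVert_\infty ,
\end{align}
and analogously for $\hat P_{0,i^*}$, $\hat P_{2,i^*}$, $\hat K_{0,i^*+1}$, $\hat K_{1,i^*+1}$ in the appropriate norms. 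The goal is to drive each of the two right-hand terms below $\eta/2$.

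First I would fix the iteration index independently of $N$. By Theorem \ref{thm: model-based PI}, the model-based iterates $P_{0,i}$, $P_{1,i}(\cdot)$, $P_{2,i}(\cdot,\cdot)$ converge to $P_0^*$, $P_1^*$, $P_2^*$, the convergence being \emph{uniform} for the function-valued quantities by the equicontinuity argument invoked there. Since the gains obey $K_{0,i+1}=R^{-1}B^\top P_{0,i}$ and $K_{1,i+1}(\theta)=R^{-1}B^\top P_{1,i}(\theta)$, i.e.\ depend on $P_{0,i},P_{1,i}$ only through the fixed bounded factor $R^{-1}B^\top$, the gain sequences also converge, uniformly in $\theta$, to $K_0^*$ and $K_1^*$. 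Hence there exists an index $i^*$ at which all the model-based quantities lie within $\eta/2$ of their optimal counterparts, so the second term in each triangle inequality is controlled.

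With $i^*$ now frozen, I would handle the first term using Lemma \ref{lm: model-based PI}. Applied with tolerance $\eta/2$ at the single index $i^*$, that lemma furnishes an integer $N^{**}$ (allowed to depend on $i^*$ and $\eta$) such that whenever $N>N^{**}$ the data-driven reconstructions satisfy $\lvert \hat P_{0,i^*}-P_{0,i^*}\rvert\le\eta/2$, $\lVert \hat P_{1,i^*}-P_{1,i^*}\rVert_\infty\le\eta/2$, $\lVert \hat P_{2,i^*}-P_{2,i^*}\rVert_\infty\le\eta/2$, and likewise for the two gain approximations. Substituting both half-$\eta$ bounds into the triangle inequalities yields the stated estimates, completing the argument.

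The delicate point, and the main obstacle, is the interplay of the two limits together with the observation that Lemma \ref{lm: model-based PI} measures the approximation error against the \emph{exact} model-based iterate $P_{j,i^*}$ rather than against some perturbed data-driven trajectory. Because that lemma is established by induction on $i$ and already absorbs the accumulated carried-forward control errors $\tilde u_i=\hat u_i-u_i$ into the residual $E_i$, I may legitimately choose $i^*$ first and only afterwards send $N\to\infty$; this is precisely the quantifier order (pick $i^*$, then $N^{**}$ depending on $i^*$) demanded by the statement. The only genuine care required is to confirm that the model-based convergence for $P_{1,i},P_{2,i}$, and hence $K_{1,i+1}$, holds in the supremum norm and not merely pointwise, which is exactly the uniform-convergence conclusion of Theorem \ref{thm: model-based PI} combined with the boundedness of $R^{-1}B^\top$.
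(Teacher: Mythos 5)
Your proposal matches the paper's own proof: the paper establishes the result by exactly the same triangle-inequality decomposition, first invoking Theorem \ref{thm: model-based PI} to choose $i^*$ so that the model-based iterate is within $\eta/2$ of the optimum, then invoking Lemma \ref{lm: model-based PI} at that fixed $i^*$ to choose $N^{**}$ so that the data-driven reconstruction is within $\eta/2$ of the model-based iterate. Your additional remarks on the quantifier order and on uniform (rather than pointwise) convergence are consistent with, and slightly more explicit than, the paper's argument.
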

\begin{proof}
Here we take $\hat{P}_{1,i}$ as an example. According to Theorem \ref{thm: model-based PI}, there exists $i^*>0$, such that 
\begin{align}
    \lVert P_{1,i^*}(\theta) - P_{1}^*(\theta) \lVert_\infty \leq \eta/2.
\end{align}
According to Lemma \ref{lm: model-based PI}, there exists $N^{**}>0$, such that if $N>N^{**}$, 
\begin{align}
    \lVert \hat{P}_{1,i^*}(\theta) - P_{1,i^*}(\theta) \lVert_\infty \leq \eta/2.
\end{align}
Therefore, by the triangle inequality, we have
\begin{align}
\begin{split}
    \lVert \hat{P}_{1,i^*}(\theta) - P_{1}^*(\theta) \lVert_\infty &\leq \lVert \hat{P}_{1,i^*}(\theta) - P_{1,i^*}(\theta) \lVert_\infty \\
    &+ \lVert {P}_{1,i^*}(\theta) - P^*_{1}(\theta) \lVert_\infty \leq \eta.    
\end{split}
\end{align}
\end{proof}

According to Theorem \ref{thm: PI converge}, we see that $\hat{K}_{j,i+1}(j=0,1)$ calculated by Algorithm \ref{algo: PI} converges to ${K}^*_{j}(j=0,1)$ as the iteration step of the algorithm and the number of basis functions tend to infinity. Therefore, the proposed data-driven PI solves Problem \ref{pbm: PI}.

\section{Practical Applications}
In this section, we demonstrate the effectiveness of the proposed data-driven PI algorithms by two practical examples, with regards to regenerative chatter in metal cutting and connected and autonomous vehicles (CAVs) in mixed traffic consisting of both autonomous vehicles (AVs) and human-driven vehicles (HDVs).

\subsection{Regenerative Chatter in Metal Cutting}
Consider the example of regenerative chatter in metal cutting \cite[Example 1.1]{Book_Keqin}, \cite{Mei2005}, where the thrust force of the tool is proportional to the instantaneous chip thickness $([x(t)]_1 - [x(t-\tau)]_1)$, leading to the time-delay effect. Then the model can be described by (\ref{eq: Time delay system}) with $A = \begin{bmatrix}0 &1\\ -(k+F_t/m) & -c/m \end{bmatrix}$, $A_d = \begin{bmatrix}0 &0\\ F_t/m & 0 \end{bmatrix}$ and $B = \begin{bmatrix}0 \\ 1/m\end{bmatrix}$. In this example, The parameters are chosen as $m=2$, $c=0.2$, $k=10$, $F_t=1$, and $\tau=1.3$. The initial state of the system is $[x_0(\theta)]_i = 10\sum_{j=1}^{50}\sin{w_{i,j} \theta} + [\chi]_i$ for $i=1,2$, where $w_{i,j}$ and $[\chi]_i$ are randomly sampled from the uniform distribution over $[-10,10]$. The initial admissible controller is $u_1(x_t) = -K_{0,1}x(t) - \int_{-\tau}^{0} K_{1,1}(\theta) x_t(\theta)\mathrm{d}\theta$, with $K_{0,1} = \begin{bmatrix} 1.7417 & 3.9239\end{bmatrix}$ and $K_{1,1}(\theta) = 0$. The exploration noise is set as $u_d(t) = 20\sum_{i=1}^{50}\sin{\omega_i t}$, where $\omega_i$ is randomly sampled from an independent uniform distribution over $[-10,10]$. $u = u_1 + u_d$ is applied to collect the input-state data from the system. For the performance index (\ref{eq: cost function}), $Q=\text{diag}([100,100])$ and $R = 1$. For the basis functions, $\Phi(\theta) = [1, \theta, \theta^2, \theta^3]^\top$, $\Psi(\xi, \theta) = [1, \xi + \theta, \xi^2+\theta^2, \xi\theta, \xi^3+\theta^3, \xi^2\theta + \xi\theta^2, \xi^3\theta+\xi\theta^3, \xi^2\theta^2, \xi^3\theta^2+\xi^2\theta^3, \xi^3\theta^3]^\top$, and $\Lambda(\xi, \theta) = [1, \theta, \theta^2, \theta^3]^\top \otimes [1, \xi, \xi^2, \xi^3]^\top$. 

For the proposed data-driven PI algorithm, the threshold is set as $\delta = 10^{-3}$. As shown in Fig. \ref{fig: Upsilon}, the weight of the basis functions $\hat{\Upsilon}$ converges after eight iterations. In order to inspect the evolution of the performance index with respect to the iteration, we compare the controllers updated at each iteration for the same initial state $x_0$, of which the result is shown in Fig. \ref{fig: Cost_PI}. It is obvious that the performance index decreases with the iteration of the data-driven PI algorithm. The responses of the state with the initial controller and the learned ADP controller are compared in Fig. \ref{fig: Comparison_PI}. The values of the performance index are $J(x_0, \hat{u}_1) = 5.8941\cdot10^4$ and $J(x_0, \hat{u}_{8}) = 3.0256\cdot10^4$. 

Algorithm 1 is compared with the semi-discretization method \cite{Insperger2002}, which transfers system \eqref{eq: Time delay system} into a discrete-time delay-free system with an augmented state, i.e.
\begin{align}\label{eq:discrete-time LTI}
    x_d(k+1) = \bar{A}x_dx(k) + \bar{B}u(k).
\end{align}
The sampling period of the discretization is set as $\Delta t = 0.1s$ and the dimension of the augmented state is $n_d=n\frac{\tau}{\Delta t} = 26$. Then, with the accurate model matrices $(\bar{A},\bar{B})$, the model-based discrete-time linear quadratic regulator (DLQR) is applied to calculate the optimal controller for system \eqref{eq:discrete-time LTI}. The discrete-time ADP in \cite{Huang2021} is also applied for \eqref{eq:discrete-time LTI} to generate an adaptive optimal controller using the same length trajectory data as Algorithm 1. Then these three obtained controllers are tested on system \eqref{eq: Time delay system}, and the corresponding results are shown in Table \ref{tab:compCost}. We see that the performance index is minimal under Algorithm 1. This explains why the discretization sacrifice the system performance. Ideally, discrete-time ADP can generate a similar controller as the model-based DLQR, and the performance indices should be similar. The large deviation between discrete-time ADP and model-based DLQR is induced by the fact that the persistent excitation condition for discrete-time ADP is not satisfied. This further illustrates that by semi-discretization, the dramatically increased dimension of the augmented state makes the requirements on the sampled data more demanding.

The robustness of Algorithm \ref{algo: PI} to measurement noise is evaluated. The measurement of the state is disturbed by an independent Gaussian noise. That is $z(t) = x(t) + \xi(t)$, where $z(t)$ is the measured state, and $\xi(t) \sim \mathcal{N}(0,0.2)$ is independently and identically distributed Gaussian noise. In Algorithm \ref{algo: PI}, the trajectory data of $z(t)$ instead of $x(t)$ is applied to construct the matrices $M_i$ and $Y_i$. The result is shown in Fig. \ref{fig: Cost_noise}. Using the noisy data, for the same initial state $x_0$, the performance index converges after the $5$th to $J=3.3541\cdot10^4$. Compared with the performance index without noise, we see that under the influence of measurement noise, Algorithm 1 can still find a near optimal solution. 

\begin{figure}[t]
\begin{subfigure}{.49\linewidth}
	\centering
	\includegraphics[width=\linewidth]{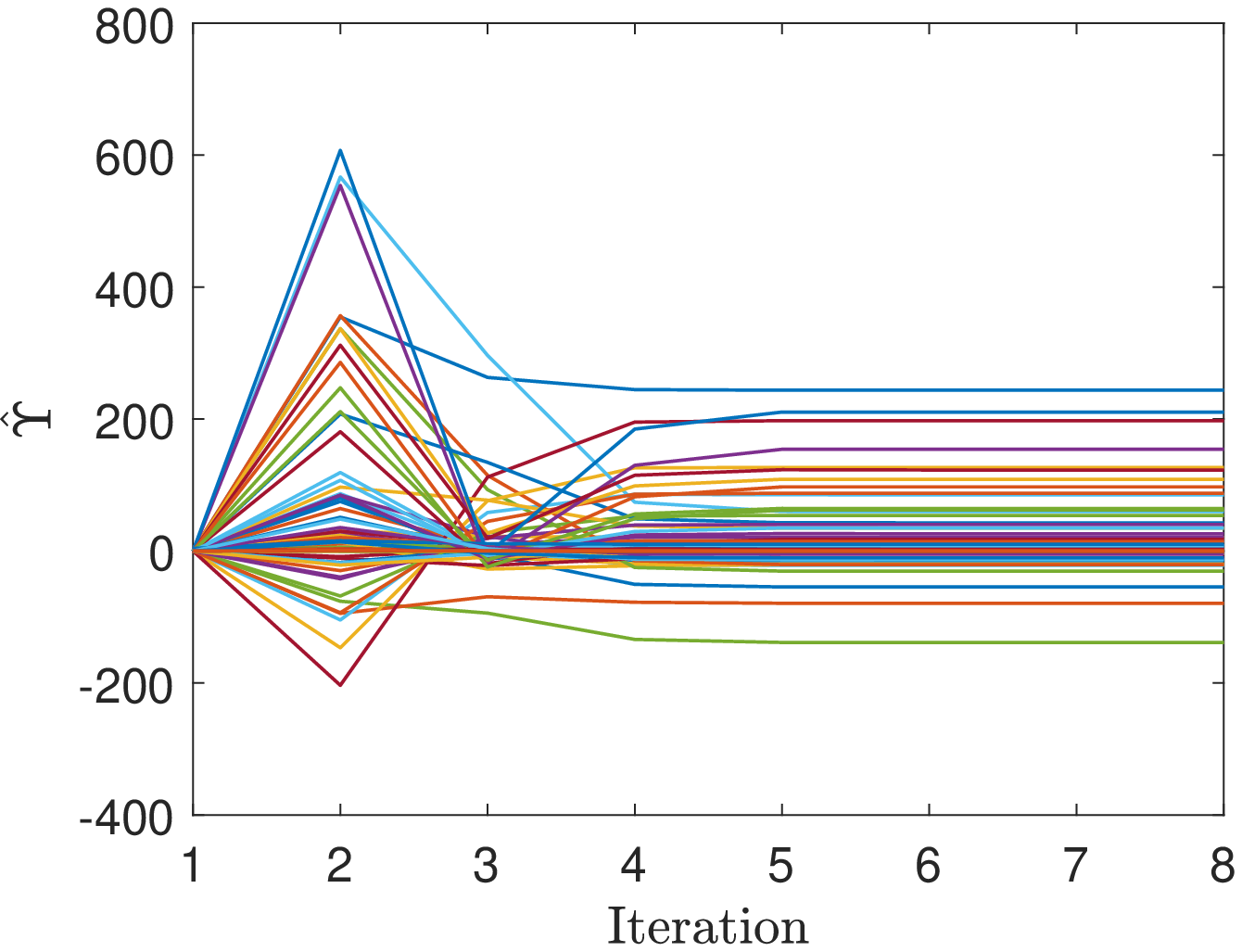}  
	\caption{Evolution of $\hat{\Upsilon}$}
	\label{fig: Upsilon}
\end{subfigure}
\begin{subfigure}{.49\linewidth}
	\centering
	\includegraphics[width=\linewidth]{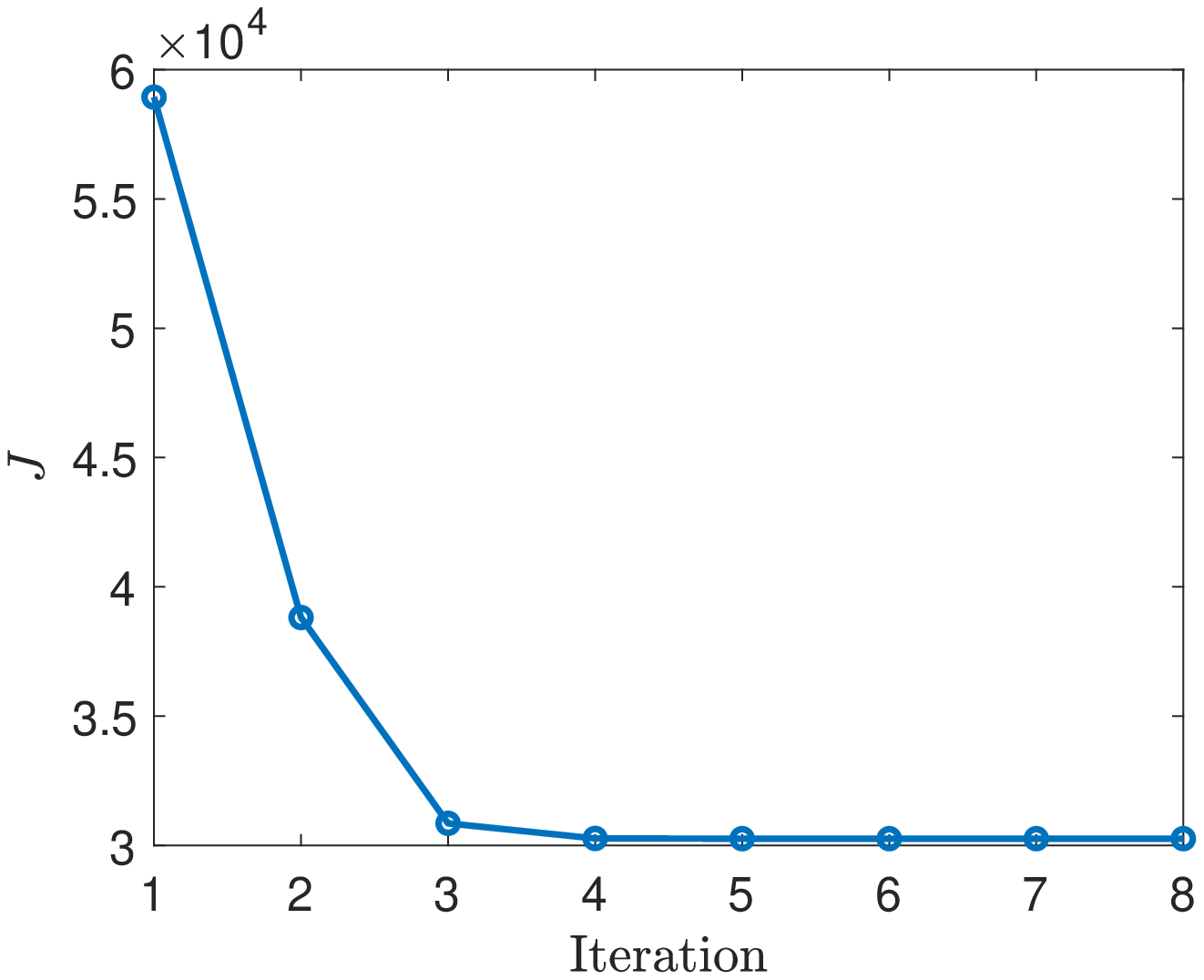}  
	\caption{Evolution of $J$.}
	\label{fig: Cost_PI}
\end{subfigure}
\caption{Evolution of $\hat{\Upsilon}$ and the performance index with respect to iterations.}
\end{figure}  

\begin{figure}[t]
	\centering
	\includegraphics[width=0.75\linewidth]{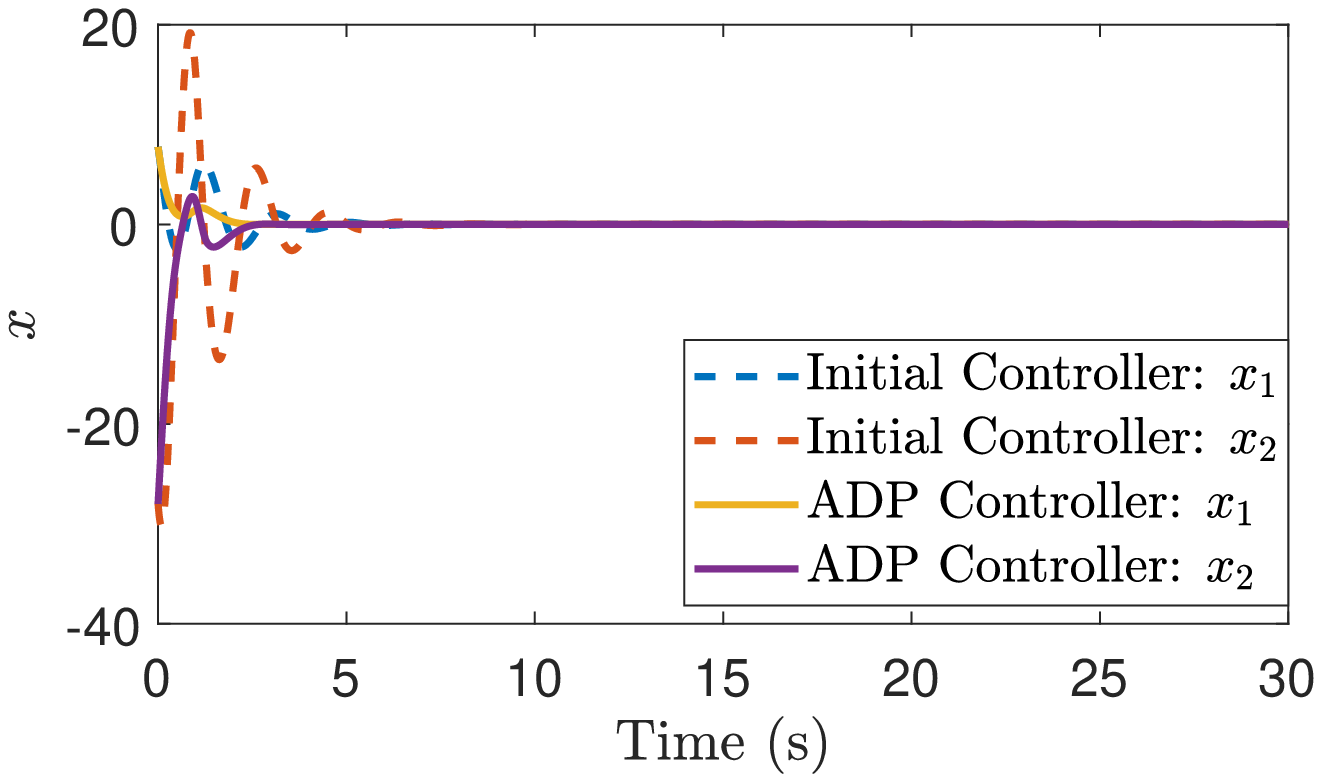}  
	\caption{ADP controller of PI.}
	\label{fig: Comparison_PI}
\caption{Comparison between the initial controller and the ADP controller for regenerative chatter in metal cutting.}
\end{figure} 

\begin{table}[tb]
    \centering
\caption{Comparison of Alg. \ref{algo: PI} and  semi-discretization.}
\begin{tabular}{ |p{2.4cm}|p{2.4cm}|p{2.4cm}|  }
 \hline
 \multicolumn{3}{|c|}{Performance index} \\
 \hline
 Algorithm \ref{algo: PI} & Model-based  DLQR & Discrete-time ADP\\
 \hline
 $3.0256\cdot10^4$ & $3.2614\cdot10^4$   & $4.8203\cdot10^4$   \\
 \hline
\end{tabular}
    \label{tab:compCost}
    \vspace{5mm}
\end{table}

\begin{figure}[t]
	\centering
	\includegraphics[width=0.5\linewidth]{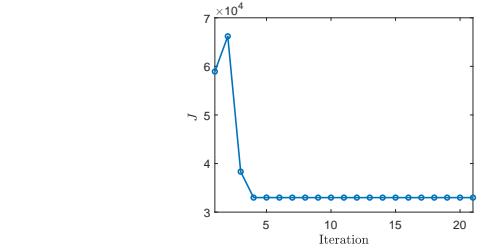}  
	\caption{Evolution of the performance index using noisy data.}
	\label{fig: Cost_noise}
\end{figure} 

\subsection{CAVs in Mixed Traffic}
\begin{figure}[t]
	\centering
	\includegraphics[width=\linewidth]{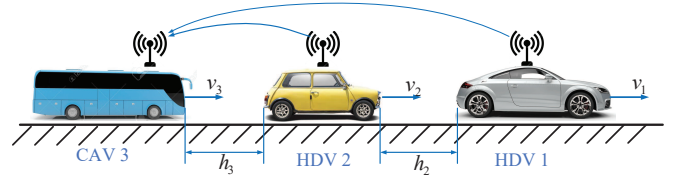}  
	\caption{A string of HDVs and an AV.}
	\label{fig: CAVs}
\end{figure}    

Consider a string of two HDVs and one AV as shown in Fig. \ref{fig: CAVs}, where $h_i$ denotes the bumper-to-bumper distance between the $i$th vehicle and $(i-1)$th vehicle, and $v_i$ denotes the velocity of the $i$th vehicle. Define $\Delta h_i = h_i - h^*$ and $\Delta v_i = v_i - v^*$, where $(h^*, v^*)$ is the equilibrium of the vehicles. $h^*$ depends on the human parameters and $v^* = v_1$. Assuming the velocity of the leading vehicle is constant, and considering the time-delay effect caused by human drivers' reaction time, the system can be described as a linear time-delay system (\ref{eq: Time delay system}) with $x = [\Delta h_2, \Delta v_2, \Delta h_3, \Delta v_3]^\top$, $A = \begin{bmatrix} 0 &-1 &0 &0 \\ 0 &0 & 0 &0\\0 &1 & 0 &-1\\0 &0 & 0 &0 \end{bmatrix}$, $A_d = \begin{bmatrix}0 &0 & 0 &0\\ \alpha_2 
c^* & -(\alpha_2+\beta_2) &0 &0 \\ 0 &0 & 0 &0\\0 &0 & 0 &0\end{bmatrix}$, and $B = \begin{bmatrix} 0 \\0 \\0 \\1\end{bmatrix}$, where $\alpha_2$ and $\beta_2$ denote the human driver parameters and $c^*$ is the derivative of the range policy \cite{Huang2021, Ge2017}. In the simulation, the human parameters are set as $\alpha_2 = 0.1$, $\beta_2 = 0.2$, $\tau = 1.2$, and $c^* = 1.5708$. The weighting matrix of the performance index is $Q = \text{diag}([1,1,10,10])$, and $R=1$. The initial state of the system is $[x_0(\theta)]_i = 30\sum_{j=1}^{10}\sin{w_{i,j} \theta} + [\chi]_i$ for $i=1,2,3,4$, where $w_{i,j}$ and $[\chi]_i$ are randomly sampled from the uniform distribution over $[-10,10]$ and $[-30,30]$ respectively. The initial admissible controller is $u_1(x_t) = -K_{0,1}x(t) - \int_{-\tau}^{0} K_{1,1}(\theta) x_t(\theta)\mathrm{d}\theta$, with $K_{0,1} = \begin{bmatrix} -0.0897 & -0.2772 &-0.3 &0.5196\end{bmatrix}$ and $K_{1,1}(\theta) = 0$. The exploration noise is set as $u_d(t) = \sum_{i=1}^{200}\sin{\omega_i t}$, where $\omega_i$ is randomly sampled from a independent uniform distribution over $[-100,100]$. $u = u_1 + u_d$ is applied to collect the input-state data from the system. The basis functions are  same as these in the example of regenerative chatter in metal cutting. The analytical expressions of the optimal values $K^*_0$ and $K^*_1$ are derived by the method in \cite{Ge2017}, where the precise model is assumed known.

\begin{figure}[t]
\begin{subfigure}{.49\linewidth}
	\centering
	\includegraphics[width=\linewidth]{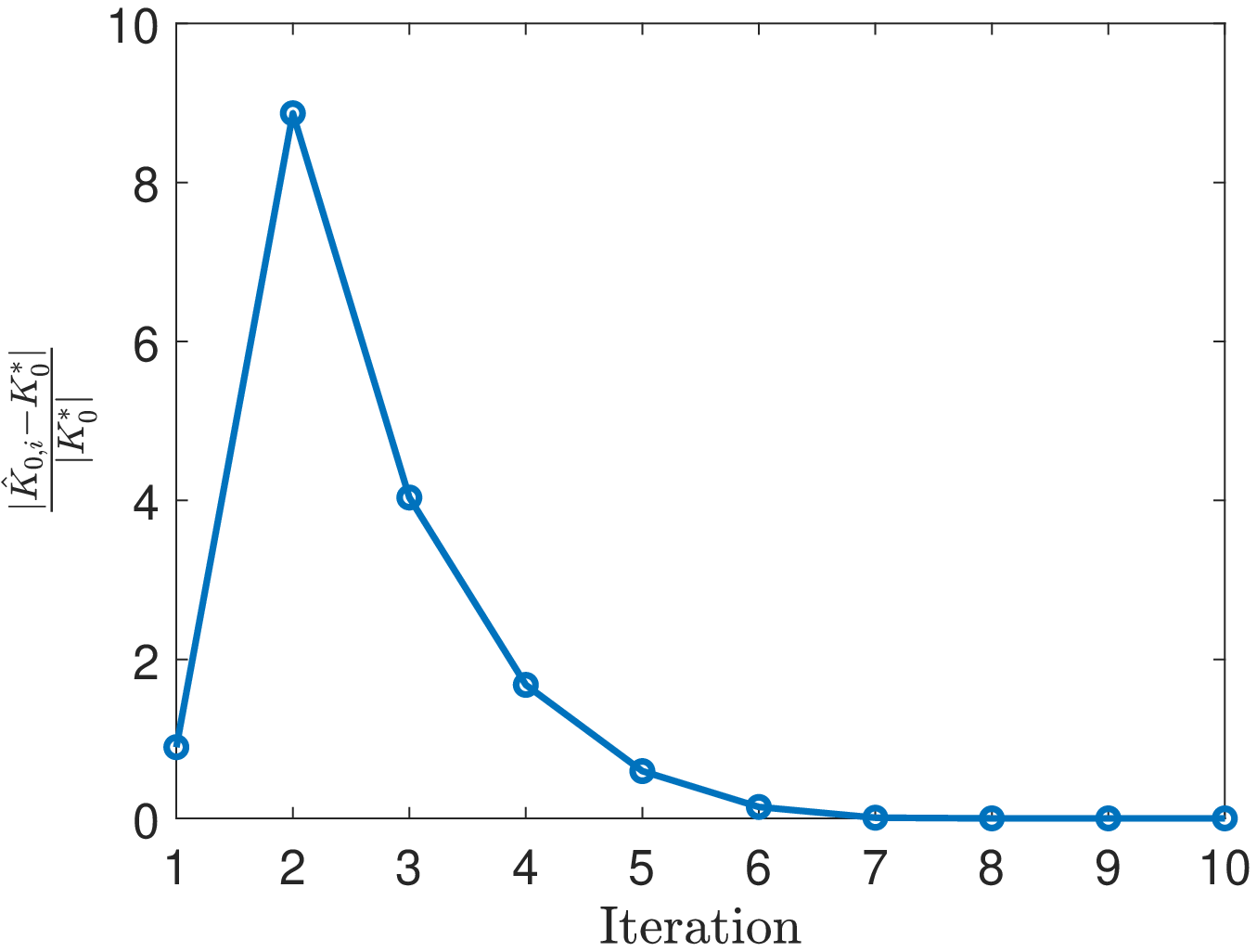}  
	\label{fig: K0_norm_PI}
\end{subfigure}
\begin{subfigure}{.49\linewidth}
	\centering
	\includegraphics[width=\linewidth]{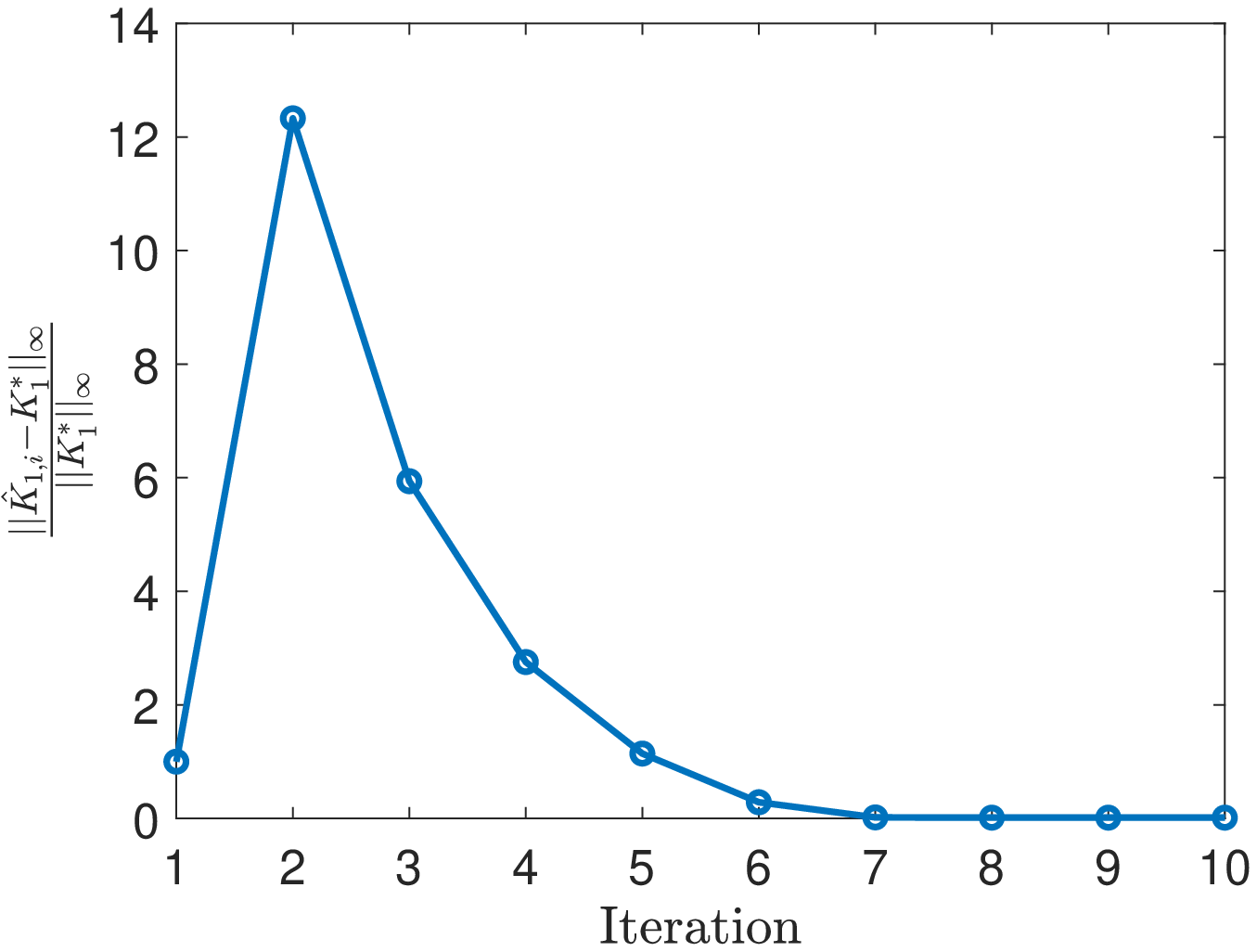}  
	\label{fig: K1_norm_PI}
\end{subfigure}
\caption{Convergence of $\hat{K}_{0,i}$ and $\hat{K}_{1,i}(\theta)$ to the optimal values ${K}^*_{0}$ and ${K}^*_{1}(\theta)$ by PI algorithm. }
\label{fig: Evolution of PI}
\end{figure}  

\begin{figure}[t]
	\centering
	\includegraphics[width=.7\linewidth]{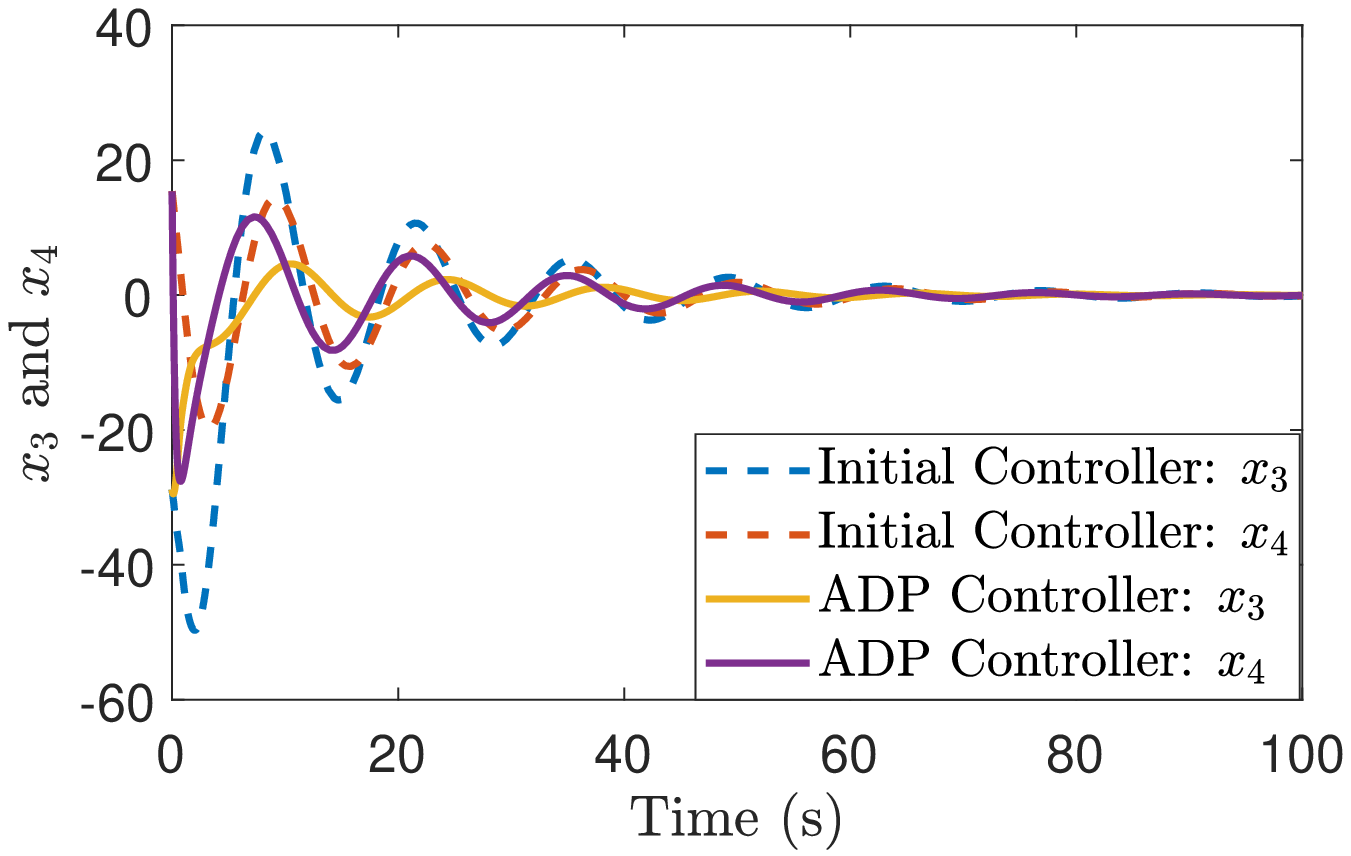}  
	\caption{ADP controller of PI}
	\label{fig: Comparision_4dim_PI}
\caption{Comparison between the initial controller and the ADP controller for CAVs.}
\end{figure}

For our proposed data-driven PI algorithm, the aforementioned initial admissible control $u_1$ is applied to start the algorithm. The threshold is $\delta=10^{-3}$. The convergence of $\hat{K}_{0}$ and $\hat{K}_{1,i}$ are shown in Fig. \ref{fig: Evolution of PI}. At the last iteration of PI, $\frac{|\hat{K}_{0,10}-K^*_{0}|}{|K^*_{0}|} = 0.0008$ and $\frac{||\hat{K}_{1,10}-K^*_{1}||_\infty}{||K^*_{1}||_\infty} = 0.0137$. Therefore, the proposed data-driven PI algorithm can find a sub-optimal solution. The performance comparisons of the initial controller and the ADP controller are shown in Fig. \ref{fig: Comparision_4dim_PI}. Because $x_1$ and $x_2$ are the states of the HDV2, which cannot be influenced by the controller for the AV, they are not plotted in the figure. From the figure, it is seen that with the ADP controller, $x_3$ converges to the equilibrium more quickly than the initial controller. For the values of performance index, $J(x_0,u_1) = 1.4627\cdot 10^{5}$ and  $J(x_0,u_{10}) = 4.7288\cdot 10^{4}$. Therefore, the proposed data-driven PI algorithm can minimize the performance index and improve the performance of the closed-loop system as a consequence.


\section{Conclusions}
This paper has proposed for the first time a novel data-driven PI algorithm for a class of linear time-delay systems described by DDEs. The first major contribution of this paper is to generalize the well-known Kleinman algorithm \cite{Kleinman1968} -- a model-based PI algorithm -- from linear time-invariant systems to linear time-delay systems. The second major contribution of this paper is that we have combined the proposed model-based PI algorithm and RL techniques
to develop a data-driven PI algorithm for solving the direct adaptive optimal control problem for linear time-delay systems with completely unknown dynamics. The efficacy of the proposed learning-based adaptive optimal control design methods has been validated by means of two real-world applications arising from metal cutting and connected vehicles. Our future work will be directed at extending the proposed learning-based control methodology to other practically important classes of time-delay systems such as nonlinear systems and multi-agent systems.

\ifCLASSOPTIONcaptionsoff
  \newpage
\fi



%
%
%
\bibliographystyle{ieeetr}        
\bibliography{IEEEabrv,mybibfile}

\end{document}